\newcommand{\M}{\kappa(l_1, \dotsc, l_k)}
\newcommand{\N}{\mathbb{N}}
\newcommand{\R}{\mathbb{R}}
\newcommand{\F}{\mathbb{F}}
\newcommand{\bigO}{\mathcal{O}}
\newcommand{\str}[1]{\operatorname{str}(#1)}
\newcommand{\nut}{N \cup T}  
\newcommand{\opcija}[2]{#1\;\; [#2]}
\theoremstyle{plain}
\newtheorem{thm}{Theorem}[section]
\newtheorem{lema}[thm]{Lemma}
\newtheorem*{cor}{Corollary}
\theoremstyle{definition}
\newtheorem{definition}{Definition}[section]
\newtheorem{ex}{Example}
\newtheorem{note}{Note}
\begin{document}

\begin{frontmatter}



\title{$P$(Expression|Grammar): \\ Probability of deriving an algebraic expression \\ with a probabilistic context-free grammar}


\author[fmf]{Urh Primožič}
\author[fmf,jsi]{Ljupčo Todorovski}
\author[fmf,jsi]{Matej Petković\corref{x}}

\affiliation[fmf]{organization={University of Ljubljana, Faculty of Mathematics and Physics},
            addressline={Jadranska 21}, 
            city={Ljubljana},
            postcode={SI-1000}, 
            country={Slovenia}}

\affiliation[jsi]{organization={Jožef Stefan Institute, Department of Knowledge Technologies},
            addressline={Jamova 39}, 
            city={Ljubljana},
            postcode={SI-1000}, 
            country={Slovenia}}
\cortext[x]{Corresponding author, \texttt{matej.petkovic@fmf.uni-lj.si}}

\begin{abstract}
Probabilistic context-free grammars have a long-term record of use as generative models in machine learning and symbolic regression. When used for symbolic regression, they generate algebraic expressions. We define the latter as equivalence classes of strings derived by grammar and address the problem of calculating the probability of deriving a given expression with a given grammar. We show that the problem is undecidable in general. We then present specific grammars for generating linear, polynomial, and rational expressions, where algorithms for calculating the probability of a given expression exist. For those grammars, we design algorithms for calculating the exact probability and efficient approximation with arbitrary precision.
\end{abstract}



\begin{keyword}
probabilistic context-free grammar \sep computability \sep estimating probability \sep algebraic expression \sep symbolic regression
\end{keyword}

\end{frontmatter}



\section{Introduction}

Every language (natural or programming) is defined by grammar, i.e., a set of allowed symbols and rules that specify how the symbols can form longer strings (e.g., words or sentences). In context-free grammars (CFGs)~\cite{sipser2006}, the rules must not take the context of symbols into account, so rules such as \textit{i before e except after c} are not possible (since \textit{after c} defines the context of the rule). Thus, CFGs are inappropriate for natural languages and are regularly used to specify programming languages. Despite that, CFGs are an established and widely used tool for natural language processing, e.g.,~\cite{nltk, nlp-cfg-seki, nlp-cfg-thorsteinsson}.

Their extension, probabilistic (also known as stochastic) context-free grammars (PCFGs), assign probabilities to the rules, leading to the probability of deriving a string. \textit{Consistent} PCFG defines a probability distribution over the space of strings in the grammar language, i.e., the probabilities of the strings derived by the consistent grammar sum up to 1. The ability of CFGs to specify languages and PCGs to specify probability distributions over strings in the language make them a common choice for generative models in machine learning~\cite{kusner2017,duvwnaud2013}, and also symbolic regression~\cite{todorovski1997,brence2021}. 

Given a set of observations $(\bm{x}, y)$, the goal of symbolic regression is to discover a closed-form equation of the form $y = f(\bm{x})$ that sufficiently well explains the target variable $y$ in terms of independent variables $\bm{x} = (x_1, \dots, x_n)$. PCFGs can be used to specify the probability distribution over the space of candidate expressions for the right-hand side of the equation. Then, equations can be discovered by following the generate-and-test paradigm~\cite{brence2021}. In the generating step, expressions with generic constants are sampled, e.g., $c_1 x_1 + c_2 x_2$ with the constants $c_1$ and $c_2$. In the second (i.e., testing) step, numerical optimization is used to fit the values of $c_1$ and $c_2$ against the given set of observations. The probabilities of the grammar rules are set so that generating simpler arithmetic expressions is more probable than generating complex ones.

This use case of PCFGs is the main focus of the paper. We are interested in PCFGs generating algebraic expressions. More specifically, we would like to compute the probability that a given PCFG derives a given expression $e$. In general, such a computation can involve summing up the probabilities of an infinite number of derivations of $e$, which happens in the following two scenarios. On the one hand, the grammar might be \textit{ambiguous}, i.e., it can derive $e$ in more than one way.
On the other hand, multiple stings can be equivalent to the same algebraic expression. For example, strings \texttt{c x + d} and \texttt{d + c x} are different but equivalent to the same expression $c x + d$. The probability of the latter is the sum of the probabilities of all the strings derived from the grammar being equivalent to $c x + d$.

Thus, the article's main aim is to design an algorithm for calculating the probability of deriving an expression $e$ from an expression-generating PCFG $G$. The purpose is achieved through three key contributions. The first contribution of the article is a formal definition of the equivalence relation $\sim$ between strings that relates strings that are equivalent to the same expression. The second contribution is the proof that an algorithm that would compute the probability of an equivalence class $[w]_\sim$ of a given string $w$ for any given PCFG does not exist. The article's third contribution is an algorithm for specific PCFGs generating linear, polynomial, and rational expressions. We analyze the algorithm's computational complexity and propose an approximate version that efficiently approximates the actual probability for cases when the exact calculation is infeasible.

The rest of the article is organized as follows. Section~2 introduces CFGs, PCFGs, and the notation used in the article. In Section~3, we introduce expression-generating grammars and their use for symbolic regression, formally define an equivalence relation among strings of an expression-generating grammar, and define expressions as equivalence classes of the relation. Section~4 proves that the algorithm for calculating the probability of deriving a given expression with any given PCFG does not exist and presents exact and approximate algorithms for grammars generating linear, polynomial, and rational expressions. Section~5 concludes the paper with a summary and outline of directions for further research.

\section{Probabilistic context-free grammars}

In this section, we first define context-free grammars and language of strings that can be derived with a given grammar. In continuation, we define probabilistic context-free grammars and probabilities of strings in the grammar language. Finally, we discuss the issue of calculating a probability of a string in the grammar language. Table~\ref{tab:notation} enlists and defines the semantics of the notation symbols that we use in the rest of the article.


\begin{table}[!htb]
    \centering
    \begin{tabular}{l|p{0.75\textwidth}}
        Symbol & Definition and semantics \\
        \hline
        $V^*$ & the set of all strings (finite sequences) with elements from $V$, i.e., $V^*=\{a_1 \dotsm a_n \mid a_i \in V \land n \in \N_0 \}$ \\
        $\varepsilon$ & the empty string, $\varepsilon \in V^*$ for any set $V$ \\
		\hline
        $T$ & the set of terminal symbols, also terminals \\
        \texttt{a}, \texttt{x}, \texttt{y} & arbitrary terminals, elements of $T$ \\
        $w$, $u$, $v$ & arbitrary strings of terminals (words), elements of $T^*$ \\
		\hline
        $N$ & the set of non-terminal symbols, also non-terminals \\
        $S$ & the starting non-terminal, element of $N$ \\
        $A$, $B$, $C$ & arbitrary non-terminals, elements of $N$ \\
		\hline
		$X$, $Y$, $Z$ & arbitrary non-terminals or terminals, elements of $\nut$ \\
		$\alpha$, $\beta$, $\gamma$ & arbitrary strings of terminals and non-terminals, elements of $(\nut)^{*}$ \\
		\hline
        $R$ & the set of grammar production rules \\
		$\rho$, $A \to \alpha$ & a grammar production rule, also rule, element of $R$ \\
		\hline
		$G$ & deterministic or probabilistic context-free grammar \\
		$\tau$ & grammar parse tree \\
        $p$, $q$, $r$ & probabilities \\
		\hline
    \end{tabular}
    \caption{The definitions and semantics of the notation symbols used throughout the article.}
    \label{tab:notation}
\end{table}

Let us first introduce the formal definition of a context-free grammar.
\begin{definition}
\label{def:cfg}
Let $N$ and $T$ be nonempty, finite disjoint sets and $S$ a distinguished element of $N$. Let $R \subseteq N \times (\nut{})^*$ be a left-total relation. A tuple $G = (N, T, S, R)$ is called \emph{context-free grammar (CFG)} with a set of non-terminal symbols $N$, terminal symbols $T$, start symbol $S$ and a set of production rules $R$.
\end{definition}
\noindent Note that in literature on context-free grammars, e.g., the standard textbook~\cite{sipser2006}, alternative symbols and notions can be used. The elements of $N$ can also be referred as variables (and denoted by $V$), the set $T$ can also be an alphabet (denoted by $\Sigma$), and the strings, derived by the grammar, are referred to as expressions. However, since the focus of the paper are grammars that generate mathematical expressions consisting of variables and constants, we are using the notation introduced in Table~\ref{tab:notation} and Definition~\ref{def:cfg}.

A production rule $(A, \alpha) \in R$ is often written as $A \to \alpha$, conveying the notion that $A$ can be rewritten with $\alpha$. For a rule $A \to \alpha \in R$ and strings $\beta, \gamma \in (\nut)^*$, we also write $\beta A \gamma \to \beta \alpha \gamma$ meaning that we can \emph{derive} the string $\beta \alpha \gamma$ from $\beta A \gamma$ using the rule $A \to \alpha$. The set of production rules $\{A \to \alpha_1, \dotsc, A \to \alpha_n\}$ for a non-terminal $A$ can be more compactly written as $A \to \alpha_1 \mid \cdots \mid \alpha_n$.

Given a CFG, we derive (also generate) a string of terminals (also word) as follows. We start with $S$ and choose an arbitrary rule $S \to \alpha \in R$ and replace $S$ with $\alpha$. In the iterative step, we choose a non-terminal symbol $A \in \alpha$, a rule $A \to \beta$, and replace $A$ with $\beta$. We repeat the iteration until the resulting string $\beta$ consists of terminals only. Since the relation $R$ is left-total (recall Definition~\ref{def:cfg}), we can always find at least one rule applicable in the iterative step. Below, we give a more formal definition of string generation.
\begin{definition}
\label{def:language}
A CFG $G = (N, T, S, R)$ \emph{derives} a string $w \in T^*$, if there exists a finite sequence $(\alpha_i)_{i = 1}^n$ of elements of $(\nut)^*$, such that $\alpha_1 = S$, $\alpha_n = w$, and $\alpha_i \to \alpha_{i+1}$, for all $1 \leq i < n$. The set $L(G)$ of all strings that $G$ derives is referred to as the \emph{language} of $G$.
\end{definition}

\noindent Let us illustrate the above definitions on a simple example.

\begin{ex}
\label{ex:basic-grammar}
Let $G=(\{S, A, B, C\}, \{\texttt{x},\texttt{y}\}, S, R)$ be a CFG with the following five production rules:
\begin{align*}
    S & \to AB, \\
    A & \to \mid \texttt{x}, \\
    B & \to C \mid \texttt{y}, \\
    C & \to \texttt{y}.
\end{align*}
The language of this grammar is a singleton $L(G) = \{\texttt{xy}\}$. A possible generating sequence for \texttt{xy} is $S \to AB \to \texttt{x}B \to  \texttt{xy}$.
\end{ex}

\noindent The generating sequence is often depicted with a parse (also derivation) tree.

\begin{definition}
\label{def:tree}
\emph{Parse tree} is a directed, rooted tree, such that 
    \begin{itemize}
    \item its nodes are symbols from $\nut{}$: its root is $S$, its internal nodes are non-terminals from $N$, its leaves are terminals from $T$;
    \item children of an internal node $A \in N$ are the symbols in the string $\alpha \in (\nut)^*$, for some production rule $A \to \alpha\; \in R$, ordered from left to right in the same order as in $\alpha$.
    \end{itemize}
The string derived by a parse tree is the left-to-right sequence of its leaves, and is always an element of $L(G)$. For each $w \in L(G)$, there exist at least one parse tree $\tau$ generating $w$, which is denoted by $\str{\tau} = w$. If there is more than one parse tree for at least one string in $L(G)$, the grammar $G$ is \emph{ambiguous}. The set of all parse trees for a grammar $G$ is denoted by $\Psi(G)$.
\end{definition}

\begin{ex}
\label{ex:basic-tree}
Let $G$ be the grammar from Example~\ref{ex:basic-grammar}. Then, the set $\Psi(G)$ of parse trees derived by grammar $G$ consists of the following two elements: 
$$
\begin{forest}
xx/.style={edge={white,line width=0pt}},
 [\hphantom{S}, xx
 [ $\tau_1 $\text{ =} , xx]
 ]
 \end{forest}
\begin{forest}
 [$S$
 [$A$ [\texttt{x}]]
 [$B$ [\texttt{y}]]
 ]
\end{forest}\quad\quad
\begin{forest}
xx/.style={edge={white,line width=0pt}},
 [\hphantom{S}, xx
 [\text{and}, xx]
 ]
 \end{forest}
\quad\quad
\begin{forest}
xx/.style={edge={white,line width=0pt}},
 [\hphantom{S}, xx
 [ $\tau_2 $\text{ =} , xx]
 ]
 \end{forest}
\begin{forest}
 [$S$
 [$A$ [\texttt{x} , tier=T]]
 [\hphantom{o}$B$   . [$C$ [\texttt{y}, tier=T]]]
 ]
\end{forest}
$$
Since, $\str{\tau_1} = \str{\tau_2} = \texttt{xy}$, $G$ is ambiguous.
\end{ex}

\subsection{Probabilistic context-free grammars}

We can extend context-free grammars to probabilistic context-free grammars by assigning probabilities to the production rules as follows.
\begin{definition}
\label{def:pcfg}
Let $G=(N, T, S, R)$ be a context-free grammar and $P \colon R \to [0,1]$ a mapping, such that
$$\sum_{i = 1}^{n} P(A \to \alpha_i) = 1,$$ 
for all $A\in N$ and the corresponding production rules $A\to \alpha_1 \mid \dots \mid \alpha_n$.
A pair $(G, P)$ is a \emph{probabilistic context-free grammar (PCFG)}.
\end{definition}

\noindent The probability distribution over the rules for $A \in N$ defines the probabilities with which we select one of the rules $A\to \alpha_1 \mid \dots \mid \alpha_n$ for rewriting the symbol $A$. We usually write rules and their probabilities as 
$$A \to \alpha\; [P(A \to \alpha)] \quad \text{or} \quad A \xrightarrow{P(A \to \alpha)} \alpha,$$
and conveniently extend the domain of $P$ to $N\times (\nut{})^*$ with assuming that $P(\rho) = 0$ for $\rho \notin R$. 
If we further assume that the production rules for deriving a string are being chosen independently from each other, we can define the probability of a given parse tree as follows.
\begin{definition}
\emph{Probability of a parse tree} $\tau$ from $G$ is defined as a product of probabilities of all the production rules in $\tau$, i.e.,
    $$
    P(\tau) = \prod_{\rho \in R} P(\rho)^{f_\tau(\rho)},
    $$
where  $f_\tau(\rho)$ denotes the frequency (number of appearances) of rule $\rho$ in $\tau$. 
\end{definition}

\begin{ex}
\label{ex:basic-proba}
Let $G$ be a PCFG with start symbol $S$, terminal symbol \texttt{x} and two production rules $S \to SS\ [p] \mid \texttt{x}\ [1-p]$. String \texttt{x} is derived by the following two parse trees:
$$
\begin{forest}
xx/.style={edge={white,line width=0pt}},
 [\hphantom{S}, xx
 [ $\tau_1 $\text{ =} , xx]
 ]
 \end{forest}
\begin{forest}
 [$S$
[$S$ [$S$ [\texttt{x}, tier=T]]  [$S$ [\texttt{x}, tier=T]]]
[$S$ [\texttt{x}, tier=T]]
 ]
\end{forest}\quad\quad
\begin{forest}
xx/.style={edge={white,line width=0pt}},
 [\hphantom{S}, xx
 [\text{and}, xx]
 ]
 \end{forest}
\quad\quad
\begin{forest}
xx/.style={edge={white,line width=0pt}},
 [\hphantom{S}, xx
 [ $\tau_2 $\text{ =} , xx]
 ]
 \end{forest}
\begin{forest}
 [$S$
[$S$ [\texttt{x}, tier=T]]
[$S$ [$S$ [\texttt{x}, tier=T]]  [$S$ [\texttt{x}, tier=T]]]
 ]
\end{forest}
\begin{forest}
xx/.style={edge={white,line width=0pt}},
 [\hphantom{S}, xx
 [\hphantom{S}, xx
 [\text{.}, xx]
 ]
 ]
 \end{forest}
$$
In both trees, the rule $S \to SS$ appears twice and the rule $S\to \texttt{x}$ is used three times. Thus, $P(\tau_1) = P(\tau_2) = p^2(1-p)^3$.
\end{ex}

\noindent Following the example above, 
we can compute the probability of deriving the string \texttt{x}: $P(\texttt{x}) = P(\tau_1) + P(\tau_2) = 2 p^2 (1-p)^3$. In general, we have
\begin{cor}
\emph{Probability of deriving a string} $w$ with grammar $G$ equals the sum of probabilities of all parsing trees that generate $w$, i.e.,
\begin{equation}
    \label{eq:p-word}
      P(w) = \sum \limits _{\tau:\; \str{\tau} = w} P(\tau).
\end{equation}
\end{cor}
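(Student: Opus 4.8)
The plan is to view the generative process of $G$ as inducing a (sub-)probability measure on the set $\Psi(G)$ of parse trees, and then observe that the event ``$G$ derives $w$'' is precisely the disjoint union of the elementary events ``$G$ produces the tree $\tau$'' over all $\tau$ with $\str{\tau} = w$; the formula then follows by additivity. Concretely, I would first fix the leftmost-derivation convention so that every derivation sequence $(\alpha_i)_{i=1}^n$ from Definition~\ref{def:language} corresponds to a unique parse tree and conversely. Along such a derivation, at each step a production rule $A \to \alpha$ is applied to the leftmost non-terminal $A$, and by the independence assumption stated just before the definition of $P(\tau)$, the probability of following one particular derivation is the product of the probabilities of the chosen rules, which collects into exactly $P(\tau) = \prod_{\rho \in R} P(\rho)^{f_\tau(\rho)}$. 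Since every parse tree is a finite object over the finite symbol set $\nut$, the family $\Psi(G)$ is at most countable.

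\emph{Step one.} Establish that $P$ restricted to $\Psi(G)$ is a sub-probability assignment, i.e. $\sum_{\tau \in \Psi(G)} P(\tau) \le 1$. I would do this by bounding, for each $k \in \N$, the total probability mass of the leftmost derivations that have not yet terminated after $k$ rewriting steps, showing by induction on $k$ that this mass is $\le 1$ for every $k$ (using that for each non-terminal the rule probabilities sum to $1$), and then letting $k \to \infty$. A consequence is that for any $w$ the subfamily $\{\, P(\tau) : \str{\tau} = w \,\}$ is summable, so the right-hand side of~\eqref{eq:p-word} is a well-defined real number in $[0,1]$.

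\emph{Step two.} Because $\str{}$ is a function on $\Psi(G)$, the sets $\{\, \tau : \str{\tau} = w \,\}$ are pairwise disjoint for distinct $w$ and their union is $\Psi(G)$. Hence the event ``the process outputs a tree whose yield is $w$'' --- which is by definition the event ``$G$ derives $w$'' --- is the disjoint countable union $\bigsqcup_{\tau :\, \str{\tau} = w} \{\tau\}$, and countable additivity yields $P(w) = \sum_{\tau :\, \str{\tau} = w} P(\tau)$, which is~\eqref{eq:p-word}. The worked example immediately preceding the statement is the special case $w = \texttt{x}$ with its two parse trees.

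The main obstacle is the bookkeeping in Step one: a string may admit infinitely many parse trees (e.g. via unit-like recursions such as $A \to A \mid \texttt{x}$), so the series in~\eqref{eq:p-word} is genuinely infinite, and one must justify both its convergence and the legitimacy of treating the generative process as a countably additive measure on $\Psi(G)$ rather than merely a finitely additive assignment on bounded-length derivations. Everything else --- the leftmost-derivation bijection and the disjointness argument --- is routine once that setup is in place. A lighter-weight alternative that sidesteps measure theory is to \emph{take}~\eqref{eq:p-word} as the definition of $P(w)$, motivated by the preceding discussion; then the statement merely records this definition, and its only content is that the defining series is absolutely convergent, which is again Step one.
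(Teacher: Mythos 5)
Your proposal is correct and takes essentially the same route as the paper: the corollary is stated there without any proof, as an immediate consequence of the generative semantics --- in effect the definition of $P(w)$ --- which is exactly the reading you offer in your closing paragraph. Your Step one corresponds to the paper's subsequent remark that $\sum_{\tau \in \Psi(G)} P(\tau)$ is finite (though possibly $<1$), and the fiber decomposition in Step two is the routine part; nothing you add conflicts with the paper, you merely supply the bookkeeping it omits.
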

\noindent Two remarks regarding the above formula need to be given. First, the sum $\sum_{w \in L(G)} P(w) = \sum_{\tau \in \Psi(G)} P(\tau)$ is finite, but might be strictly less than $1$, i.e., a string deriving process does not stop with a positive probability. See \ref{sec:appendix:examples} for an example. Second, the formula cannot be immediately implemented as an algorithm for calculating probabilities, since the sum might iterate over an infinite number of parse trees. In the next section, we further discuss the problem of calculating the probability of deriving a string $w \in L(G)$ with $G$.

\subsection{Probability of deriving a string}

We are interested in an algorithm that for a given input, consisting of a grammar $G$ and a string $w$, outputs $P(w)$. The sum~\eqref{eq:p-word} over all the parse trees deriving $w$ can not be immediately used in cases when infinite number of parse trees deriving $w$. Since $\nut{}$ and $R$ are finite sets, the set of parse trees generating $w$ can only be infinite, if at least one production rule leads to a branch with unlimited depth. In other words, there is a sequence of rules that overwrites a non-terminal symbol $A$ with a string $\alpha$ containing $A$. In the case of $G = (\{S\}, \{\texttt{x}\}, S, S \to S \mid \texttt{x})$, this sequence is simply $S\to S$.
\begin{definition}
\label{def:cycle}
    A \emph{cycle} in a parse tree is a branch $A \to \dotsb \to \alpha A\beta$, for some $\alpha, \beta\in (\nut{})^*$.
    If $\alpha = \beta = \epsilon$, the cycle is \emph{linear}. We define the \emph{length $|\cdot|$ of a cycle} as the number of production rules in the cycle.
\end{definition}
\begin{ex}
Both linear and non-linear cycles can lead to an infinite number of parse trees. For example, the rules $S\to S \mid \texttt{x}$ cause a linear cycle of length $|S\to S| = 1$, and generate only the string \texttt{x}. The same string can be generated by the rules  $S\to SS \mid \texttt{x} \mid \epsilon$, where the cycle $S\to SS$ is not linear.
\end{ex}

We can see that applying a non-linear cycle increases the length of the string. Therefore, a non-linear cycle can lead to an infinite number of parse trees all deriving the same string, only if some null rule (rule of the form $A\to\epsilon$) is also present in the grammar. It was shown in~\cite{etessami} that in such a case, a general exact algorithm for computing $P(\epsilon)$ does not exist. Moreover, we show (see~\ref{thm:p-epsilon-polynomial-zeros}) that the computation of $P(\epsilon)$ is at least as hard as finding the roots of polynomials over $\R$.

If we assume a probabilistic context-free grammar without null rules, the standard technique for calculating $ P(w) $ for a given grammar $ G $ is to transform G to its Chomsky-normal form (CNF). We can then apply the well-known dynamic programming parser Cocke-Kasami-Younger (CKY)~\cite{sipser2006}. While the latter was initially proposed for deterministic context-free grammar, its extensions for probabilistic CFGs are also available~\cite{chappelier1998}. The combination of CNF and CKY can calculate the probability of deriving arbitrary given $ w $ with any given grammar $ G $ (without null rules) in polynomial time.

In~\ref{app:removing-linear-cycles}, we propose a series of transformations that strips the linear cycles from a probabilistic context-free grammar, for which Eq.~\eqref{eq:p-word} will include a finite number of parse trees for any given string $w$. We show that such a transformation is applicable to any given PCFG. However, the transformed grammar is not paired with an efficient algorithm for enumerating the parse trees deriving a given string, as is the case for the combination of CNF and CKY.


\section{Expression-generating grammars}\label{sec:expression-generating-grammars}
We say a  $G$ grammar is \emph{expression-generating} if every word from its language unambiguously presents a mathematical expression. In the continuation of this article, we study a family of expression-generating context-free grammars. Let us first provide a simple example and then continue with defining \emph{expressions}. 
\begin{ex}
    Consider a grammar $G=(\{S, M\}, \{\texttt{x}\}, S, R)$ with the rules 
    $$
    \begin{array}{rcl}
         S & \to & S \texttt{+} M \mid M, \\
         M & \to & M\texttt{x} \mid \texttt{x}.
    \end{array}
    $$
Every string $w \in L(G)$ is of the form
$$
w = \underbrace{\texttt{x}\cdots\texttt{x}}_{k_1}
    \texttt{+} \underbrace{\texttt{x}\cdots\texttt{x}}_{k_2}
    \texttt{+} \cdots
    \texttt{+} \underbrace{\texttt{x}\cdots\texttt{x}}_{k_j} 
$$
where $k_i\geq 1$. Thus, the strings in $L(G)$ correspond to polynomials $\sum \limits_{i=1}^{m} a_i x^i$ for some $m\in\N$ and $a_i\in \N_0$, where $\sum \limits_{i = 1}^m a_i > 0$.
\end{ex}

Expression-generating grammars are often used in equation discovery~\cite{todorovski1997, brence2021}, also known as symbolic regression. Symbolic regression study machine learning algorithms for training models that take a form of closed-form equations from data. The appropriate model is selected both in terms of accuracy on training data and its simplicity. 

Most equation discovery algorithms follow a general generate-and-test paradigm for training the model. In the first (generate) phase, the algorithm generates expressions, following some generative model. The latter is often a stochastic process. In evolutionary approaches to equation discovery~\cite{schmidt2009}, the expressions are generated following their fitness, i.e., degree of fit to the training data. In grammar-based approaches~\cite{brence2021}, PCFG is used to generate expressions, where PCFG can specify various aspects of inductive bias: the space of expressions considered~\cite{todorovski1997} and the preference towards simpler equations~\cite{brence2021}.

In the second (test) phase, the algorithm estimates the degree of fit of the generated expression to the training data. It also often takes care of fitting the values of free constants in the expression to data. For example, rather than generating \texttt{2 x\textsubscript{1} + 3 x\textsubscript{2}} and \texttt{3 x\textsubscript{1} + 4 x\textsubscript{2}}, one can use a PCFG that generates the string \texttt{c x\textsubscript{1} + c x\textsubscript{2}}, where \texttt{c} is a symbol denoting a generic free constant. After the expression is generated, it is post-processed, so that the $i$-th occurrence of the symbol \texttt{c} is replaced by \texttt{c\textsubscript{i}}. Following the example above, we obtain the final string \texttt{c\textsubscript{1} x\textsubscript{1} + c\textsubscript{2} x\textsubscript{2}}. Finally, the expression is input to the constant-fitting algorithm that finds the value for each of the constants that leads to maximal fit with the training data.

 \begin{ex}
     Observe a data set, sampled from the equation $y = 2.5 x_1 - x_2$:
\begin{center}
    \begin{tabular}{rr|r}
    $x_1$ & $x_2$ & $y$ \\
    \hline
1  & 4 & -1.5\\
2  & 7 & -2.0\\
1  & -8 & 10.5\\
6  & -10 & 25.0
\end{tabular}    
\end{center}
Suppose that the expression \texttt{c x\textsubscript{1}\textsuperscript{2} + c} is generated in the first try. After post-processing (\texttt{c\textsubscript{1} x\textsubscript{1}\textsuperscript{2} + c\textsubscript{2}}) and fitting the constants, we obtain the final equation $y = 0.65 x_1^2 + 1.2$ (if least-squares are used for fitting). In the second try, suppose the expression \texttt{c x\textsubscript{1} + c x\textsubscript{2}} is generated.
Then, it is post-processed to \texttt{c\textsubscript{1} x\textsubscript{1} + c\textsubscript{2} x\textsubscript{2}} and fitted to $y = 2.5 x_1 - 1.0 x_2$.

\end{ex}
Given a PCFG $G$, we want to compute the probability of generating a string that corresponds to a given expression $w$, taking into account all the strings in $L(G)$ that correspond to expressions that are equivalent to $w$. This is different from computing the probability of a $w$ as a string. For example, the strings \texttt{c x} and \texttt{c x + c x} are different, but the corresponding families of functions $\{x\mapsto c_1 x \mid c_1\in \mathbb{R} \}$ and $\{x\mapsto c_1 x_1 + c_2 x_2 \mid c_1, c_2\in\mathbb{R}\}$ are the same. More formally:


\begin{definition}
    Let $G$ be an equation generating PCFG. Suppose the set of terminals $T$ contains the symbols \texttt{x\textsubscript{1}}, \dots, \texttt{x\textsubscript{n}} that correspond to variables $x_1$, \dots, $x_n$. Let $D$ be their domain, and $\F$ be the domain for the free constants. Every string $w\in L(G)$ can we written as $w = w_1 \texttt{c} w_2 \texttt{c} \cdots w_{m} \texttt{c} w_{m + 1}$ for some $m\in\N_0$, where $\texttt{c} \notin w_i$, for $1\leq i\leq m + 1$. Given such a string and some values $c_1,\dots, c_m\in\F$, we define a function $f_{w, c_1, \dots, c_{m}}$ as
    \begin{align*}
        f_{w, c_1, \dots, c_{m }} &:\quad D_{w, c_1, \dots, c_{m }} &\to&&  K_{w, c_1, \dots, c_{m }}\hphantom{oooooooo} \\
        f_{w, c_1, \dots, c_{m }} &:\quad (x_1, \dots, x_n) & \mapsto && w_1 c_1 w_2 c_2 \dotsm w_m c_{m } w_{m + 1},    
    \end{align*}
    where the occurrences of \texttt{x\textsubscript{i}} on the right-hand side are replaced by $x_i$, $1\leq i\leq n$, the domain $D_{w, c_1,\dots c_{m}}$ is the largest possible (possibly empty) and $K_{w, c_1,\dots, c_{m }}$ equals the $f_{w, c_1, \dots, c_{m }}$-image of $D_{w, c_1, \dots, c_{m }}$.
    Let $\mathcal{F}$ be the set of all such functions.
    We define mapping $\Phi\colon L(G) \to 2^\mathcal{F}$ that maps $w\in L(G)$ to the set of functions
    \begin{align*}
        \Phi(w) = \{f_{w, c_1, \dots, c_{m }} \mid c_1, \dotsc, c_m \in \F \}
    \end{align*}
    that can be obtained from $w$ using different values for $m$ constants that appear in $w$. We define an equivalence relation $\sim$ on $L(G)$ as 
    $$
        w \sim v \Leftrightarrow \Phi(w) = \Phi(v).
    $$
    We define \emph{expressions} as the equivalence classes of the relation (elements of $L(G)/_\sim$) and denote them with $[w] = \{v \in L(G) \mid v \sim w\}$, $w \in L(G)$.
\end{definition}

The central aim of the next section of the article is to establish a general algorithm for calculating $P([w])$ for a given string $w$ and an expression-generating grammar $G$.

\section{Probability of generating an expression}
Let $G$ be an expression-generating grammar and $w \in L(G)$. We need an algorithm that calculates the probability of deriving any string $v$ that corresponds to an expression $[w]$, i.e.,
$$
P([w]) = \sum \limits_{v \sim w} P(v).
$$
From now on, we will use the short phrase ``deriving an expression $[w]$'' to replace the correct long version of ``deriving any string $v \sim w$ or $v \in [w]$''.

Let us first prove that the problem of calculating the probability is undecidable in general case, where an arbitrary expression-generating grammar can be provided at input. We can prove this by introducing an extension of a standard universal grammar for generating algebraic expressions, see, e.g.,~\cite{sipser2006}. Following the pattern of the universal grammar, the production rules for the starting non-terminal $E$ build sums of arbitrary number of factors $F$, non-terminal $F$ builds factors using multiplication and division of terms $T$. Terms $T$ can be simple variables $V$, generic (free) constants or constants with known values (e.g., $\pi$ or $\ln 2$) $C$, elementary functions of an expression $E$ gathered around the non-terminal $R$, and a simple bracketed expression $(E)$:
\begin{align*}
    E &\to \opcija{E \texttt{+} F}{p_{+}} \quad\mid\quad \opcija{E \texttt{-} F}{p_{-}} \quad\mid\quad \opcija{F}{1 - (p_{+} + p_{-})} \\
    F &\to \opcija{F \cdot T}{p_{\cdot}} \quad\mid\quad \opcija{F \texttt{/} T}{p_{/}} \quad\mid\quad \opcija{T}{1 - (p_{\cdot} + p_{/})} \\
    T &\to \opcija{\texttt{(}E\texttt{)}}{p_E} \quad\mid\quad \opcija{R}{p_R} \quad\mid\quad \opcija{V}{p_V} \quad\mid\quad \opcija{C}{1 - (p_E + p_R + p_V)} \\
    R &\to \opcija{\texttt{sin(}E\texttt{)}}{p_{\sin}} \quad\mid\quad \opcija{\texttt{exp(}E\texttt{)}}{p_{\exp}} \quad\mid\quad \opcija{\texttt{|}E\texttt{|}}{1 - (p_{\sin} + p_{\exp})} \\
    V &\to \opcija{\texttt{x}}{1} \\
    C &\to \opcija{\pi}{p_{\pi}} \quad\mid\quad \opcija{\ln 2}{p_{\ln}} \quad\mid\quad \opcija{Q}{p_Q} \quad\mid\quad \opcija{(\texttt{-}Q)}{1 - p_{\pi} - p_{\ln} - p_Q} \\
    Q &\to \opcija{\texttt{(}N\texttt{)} \texttt{/} \texttt{(}N \texttt{+} 1\texttt{)}}{1} \\
    N &\to \opcija{N \texttt{+} 1}{p_N} \quad\mid\quad \opcija{0}{1 - p_N}
\end{align*}

The selection of grammar rules corresponding to $R$ (elementary functions) and $C$ (constants, leading also to $Q$ and $N$) are aligned with the expressions in the Richardson's theorem used below to establish the undecidability of the general problem of calculating the probability of a given expression.

\subsection{Undecidaility for an arbitrary grammar}
\label{section:undeciability}
We will show that no general algorithm exists that would return $P([w])$ for every given grammar. Our proof relies on Richardson's theorem from~\cite{richardson}.
\begin{thm}[Richardson]
    Let $L$ be a set of strings representing $\R \to \R$ functions, containing strings $x$ (which represent identity), $\sin x, e^x, \ln 2, \pi, |x|$ and a set of rational numbers. Suppose that the set of functions represented by the strings in $L$ is closed under pointwise addition, subtraction, multiplication and composition. Then the problem of deciding, whenever a given string from $L$ represents a function, that is zero everywhere, is undecidable. 
\end{thm}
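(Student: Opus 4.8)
\medskip
\noindent\emph{Proof sketch (following Richardson).} The plan is to reduce from an undecidable arithmetic problem. First I would invoke the undecidability of Hilbert's tenth problem (Matiyasevich, building on Davis, Putnam and Robinson): there is no algorithm that, given a polynomial $P\in\Z[y_1,\dots,y_k]$, decides whether $P(\vec{n})=0$ for some $\vec{n}\in\N^k$. (Richardson's 1968 argument predates Matiyasevich's theorem and instead uses the undecidability of exponential Diophantine equations; either version suffices here.) Since $P$ is integer-valued at integer arguments, the statement ``$P$ has no zero in $\N^k$'' is equivalent to ``$P(\vec{n})^2\ge 1$ for every $\vec{n}\in\N^k$'', so it is enough to turn $P$, algorithmically, into a string $G\in L$ whose function is identically zero if and only if $P(\vec{n})^2\ge 1$ on all of $\N^k$.

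The heart of the proof is Richardson's encoding lemma: from $P$ one constructs an elementary function $F$, using only the identity $x$, rational constants, $\pi$, $\sin$, $\exp$, $|\cdot|$ and the operations under which $L$ is closed, satisfying the dichotomy: (i) if $P(\vec{n})^2\ge 1$ for all $\vec{n}\in\N^k$, then $F(\vec{x})>0$ for all $\vec{x}\in\R^k$; and (ii) if $P(\vec{n})=0$ for some $\vec{n}\in\N^k$, then $F$ attains a negative value. The idea is to use $\sin^2(\pi x_i)$, which vanishes exactly on the integers, as a penalty that traps the infimum of $F$ in small neighbourhoods of the integer lattice, amplified by factors $\exp(Q(\vec{x}))$ for a polynomial $Q$ chosen large enough that the penalty dominates off the lattice, while near a lattice point $\vec{n}$ the value of $F$ is essentially $P(\vec{n})^2-\tfrac{1}{2}$ --- negative exactly when $P(\vec{n})=0$. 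One must also collapse the $k$ variables to a single one, since the theorem concerns $\R\to\R$ functions; this is done by composing $F$ with fixed functions that sweep one real argument through all integer tuples densely enough for the estimates to persist.

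Given such an $F\in L$, the reduction concludes in one step: set $G=F-|F|$, which is again a string of $L$ by closure under subtraction and composition with $|x|$. Then $G(x)=0$ wherever $F(x)\ge 0$ and $G(x)=2F(x)<0$ wherever $F(x)<0$, so $G$ represents the everywhere-zero function iff $F\ge 0$ everywhere iff $P$ has no zero in $\N^k$. As $G$ is computed effectively from $P$, an algorithm deciding whether a string of $L$ represents the zero function would decide Hilbert's tenth problem --- a contradiction.

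I expect the real work to be the quantitative half of the encoding lemma: one needs explicit bounds, uniform over the infinitely many unit cells of the lattice and over all of $\R^k$ (not merely a compact region), certifying that the exponential penalty terms genuinely overwhelm $|P|$ and its oscillation between lattice points, with nothing available beyond $\sin$, $\exp$, $|\cdot|$ and rational constants. The accompanying reduction from many variables to one has the same character --- it must keep all those inequalities intact --- and is the other delicate point.
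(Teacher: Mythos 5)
The paper does not prove this statement: it is imported verbatim as Richardson's theorem with a citation to Richardson's 1968 paper, and the authors use it only as a black box in the proof of their Theorem~4.2. So there is no internal proof to compare yours against. That said, your sketch is a faithful reconstruction of the standard argument from the literature: reduce from an undecidable Diophantine problem (exponential Diophantine equations in Richardson's original, Hilbert's tenth after Matiyasevich), build the encoding function $F$ whose sign dichotomy --- strictly positive everywhere versus negative somewhere --- tracks the existence of an integer zero of $P$ via $\sin^2(\pi x_i)$ penalties dominated off the lattice by exponential factors, collapse to a single real variable with functions whose joint trajectory approximates every integer tuple, and finish with $G = F - \lvert F\rvert$, which is identically zero exactly when $F \ge 0$ everywhere. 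The one caveat is the one you already flag: the encoding lemma and the many-to-one variable reduction carry essentially all of the technical weight (uniform domination estimates over the whole of $\R^k$, preservation of those estimates under the collapse), and your proposal defers them rather than proving them. As a roadmap it is correct; as a self-contained proof it is not, but neither is the paper's treatment, which simply cites the result. A minor omission: you do not say where the constant $\ln 2$ from the hypothesis enters; in Richardson's construction it is used to express $2^x$ as $e^{x\ln 2}$, which is needed when reducing from exponential Diophantine equations.
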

It is now easy to prove undecidability of calculation of $P([w])$. 
\begin{thm}
    \label{thm:undeciability}
    There is no general algorithm, that would take a grammar $G = (N, T, S, R)$ and a string $w \in T^*$, and would return the probability $P([w])$ of deriving an expression $[w]$.
\end{thm}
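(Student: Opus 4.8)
The plan is to reduce Richardson's undecidability result to the problem of computing $P([w])$. Concretely, I would take the specific expression-generating grammar $G$ displayed just before the theorem statement, whose rules for $R$ (elementary functions), $C$, $Q$, and $N$ were deliberately chosen so that $L(G)$ contains exactly the strings matching the hypotheses of Richardson's theorem: the variable \texttt{x} (identity), $\sin$, $\exp$, $\ln 2$, $\pi$, absolute value, and all rational numbers (the latter realized via the $Q$ and $N$ rules, which generate fractions $(N)/(N+1)$ with $N$ any nonnegative integer, together with their negatives). Moreover the $E$ and $F$ rules close the represented function class under addition, subtraction, multiplication, and division (hence composition via the bracketed-expression rule $T \to (E)$), so every string in $L(G)$ represents an $\R\to\R$ function in Richardson's class $L$, and conversely every function in $L$ is represented by some string in $L(G)$.

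The key step is the observation that the zero function $x \mapsto 0$ is itself an element of $L(G)$: it is derived, e.g., by the string \texttt{0}, obtained through $S \Rightarrow E \Rightarrow F \Rightarrow T \Rightarrow C \Rightarrow Q$? — more carefully, one reaches \texttt{0} via $N \to 0$ inside the appropriate derivation, so let $w_0$ be a fixed string in $L(G)$ denoting the constant $0$. Then for any string $v \in L(G)$, we have $v \sim w_0$ if and only if $\Phi(v) = \Phi(w_0)$, which (after the post-processing that renames the generic constants \texttt{c}) holds precisely when the function represented by $v$ is identically zero on its domain. Therefore $P([w_0]) = \sum_{v \sim w_0} P(v)$, and deciding whether a given candidate string $v$ satisfies $v \sim w_0$ amounts to deciding whether $v$ represents the zero function — which is exactly the undecidable problem from Richardson's theorem.

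To finish, I would argue by contradiction: suppose a general algorithm $\mathcal{A}$ computes $P([w])$ for every grammar $G$ and string $w$. I would then show this yields a decision procedure for Richardson's problem. Given an input string $v$ from Richardson's class $L$, first translate it into a string $v' \in L(G)$ (a routine syntactic reparenthesization, since the grammar's operator structure covers $+,-,\cdot,/$ and function application); the membership test $v' \in L(G)$ is decidable for a fixed grammar. The subtlety is that knowing the single number $P([w_0])$ does not by itself tell us whether a particular $v'$ lies in $[w_0]$. The cleaner route — and the one I expect the authors take — is to build, for each candidate input $v'$, a small auxiliary grammar (or to note that it suffices to prove non-existence of the algorithm, so one may instead derive a contradiction with a slightly different reduction): assign rule probabilities in $G$ generically so that distinct parse trees get distinct probabilities, then use $\mathcal A$ together with an enumeration of parse trees to semi-decide both "$v'$ represents zero" and "$v'$ does not represent zero," contradicting undecidability. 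The main obstacle is precisely this gap between computing the aggregate probability $P([w])$ and extracting the membership information needed for Richardson's problem; handling it requires care in choosing the probabilities and in how $[w_0]$ is compared against individual strings, and this is the step I would expect to occupy the bulk of the real argument.
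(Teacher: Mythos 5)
Your overall strategy---reduction from Richardson's theorem---is the same as the paper's, and you correctly diagnose the one real obstacle: the aggregate number $P([w_0])$ computed for the \emph{fixed} universal grammar tells you nothing about whether a particular candidate string $v'$ belongs to $[w_0]$. But you then stop exactly where the proof has to happen. You mention in passing the device that actually closes the gap (``build, for each candidate input $v'$, a small auxiliary grammar'') without constructing it, and instead sketch a fallback---choose generic rule probabilities so distinct parse trees get distinct probabilities, then enumerate parse trees to semi-decide both ``$v'$ represents zero'' and its negation. That fallback does not work as stated: $P([w_0])$ is an infinite sum over the (undecidable) equivalence class, and no enumeration of parse trees lets you recover membership of an individual string from that single real number; in particular you give no mechanism for semi-deciding the \emph{negative} case. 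Deferring this as ``the step I would expect to occupy the bulk of the real argument'' leaves the proof incomplete.

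The missing construction is in fact a one-liner. Given a candidate string $v$ from Richardson's class, build the two-rule probabilistic grammar $G_v$ with rules $S \to \texttt{0}\ [0.5] \mid v\ [0.5]$. In $G_v$ the strings $\texttt{0}$ and $v$ are the only words of the language, and they are $\sim$-equivalent precisely when $v$ represents the identically zero function; hence $P([v]) = 1$ in that case and $P([v]) = 0.5$ otherwise. A general algorithm for $P([\cdot])$ applied to the pair $(G_v, v)$ therefore decides zero-equivalence, contradicting Richardson's theorem. Note that the reduction quantifies over \emph{grammars}, not just over input strings for one fixed grammar---this is why the theorem is stated for an algorithm taking $G$ as part of its input, and it is the observation your proposal needed to make explicit.
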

\begin{proof}
    Suppose such an algorithm exists for a grammar whose language $L(G)$ are strings corresponding to the functions that satisfy the assumptions of Richardson's theorem. For any $w\in L(G)$, we construct a probabilistic grammar $G_w = (\{S\}, T, S, R)$ with two rules 
    $$
         S \to \opcija{0}{0.5} \mid \opcija{v}{0.5}.
    $$
    Note that string $v$ represents a function that is zero everywhere, if and only if $P([v]) = 1$. Thus, if we are able to calculate the probability $P([v])$ for all $v\in L(G)$, we obtain an algorithm, that contradicts Richardson's theorem. Therefore, a general algorithm for computing $P([w])$ does not exist. 
\end{proof}

In the rest of this section, we focus on families of grammars, often used for equation discovery, for which such an algorithm exists, which is proved by computing the corresponding probabilities explicitly.

\subsection{Linear grammar}
\label{linear-grammar}
Observe a grammar with start symbol $E$, the set of terminal symbols $T = \{\texttt{c}, \texttt{+}, \texttt{x\textsubscript{1}}, \dotsc , \texttt{x\textsubscript{n}}\}$ and production rules
$$
\begin{array}{rcl}
    E & \to & \opcija{E \, \texttt{+} \, \texttt{c}V}{p} \mid \opcija{\texttt{c}}{1 -p}\\
    V & \to & \opcija{\texttt{x\textsubscript{1}}}{q_1} \mid \dotsb \mid \opcija{\texttt{x\textsubscript{n}}}{q_n}.
\end{array}
$$
Strings derived from this grammar take the form \texttt{c + c x\textsubscript{r\textsubscript{1}} + $\cdots$ + c x\textsubscript{r\textsubscript{k}}}, where $r_i \in \{1, \dotsc , n\}$. We are interested in $P([c +cx_{r_1} + \dotsb + cx_{r_k}])$.

\subsubsection{Exact formula}
We start with the easiest case of $n = 1$. If we use $E\to \texttt{c}$ at the beginning of the derivation, we will derive $\texttt{c}$. Otherwise, we rewrite $E\to E\; \texttt{+ c} V \to E\; \texttt{+ c x\textsubscript{1}}$, and continue recursively with rewriting $E$. Thus, the parse trees $\tau_i$, $i\geq 0$ derived with the grammar can be recursively defined as
\begin{center}
   \begin{tblr}{c c c c}
     $\tau_0 =$ & 
     \begin{forest} [$E$ [\texttt{c}]]] \end{forest} &
     \hphantom{SSS} and\hphantom{SSS} $\tau_{i + 1} =$ &
        \begin{forest}
         [$E$
         [$\tau_i$]
         [\texttt{+}]
         [\texttt{c}]
         [$V$ [\texttt{x\textsubscript{1}}]]
         ]
        \end{forest}
\end{tblr}
\end{center}
with the probability of parsing $\tau_i$ being equal to $(p q_1)^i (1-p)$, where $i$ is also the number of occurrences of symbol \texttt{x\textsubscript{1}} in the string.
Thus, every tree corresponds to a different string. Since $q_1 = 1$, we have
$$
P([c+cx_1]) = \sum \limits_{i=1}^\infty (1-p)p^i = (1-p)\frac{p}{1-p} = p.
$$
Since $P([c]) + P([c + cx_1]) = 1$, we also proved that the process of production terminates with probability $1$.



We can derive a formula for $n > 1$ using the inclusion-exclusion principle. Consider a string $w = \texttt{c + c x\textsubscript{r\textsubscript{1}} + $\cdots$ + c x\textsubscript{r\textsubscript{k}}} \in L(G)$, where $1\leq r_1 < \dotsb < r_k \leq n$. Observe that a given string is an element of $[w]$ if and only if it contains at least one symbol \texttt{x\textsubscript{i}} precisely when $i \in \{r_1, \dotsc, r_k\}$.
We define $U=\{v \in L(G) \mid \forall j \in \{1, \dotsc , n\}\setminus \{r_1, \dotsc, r_k\}.\; \texttt{x\textsubscript{j}} \notin v\}$. An expression $[w]$ can now be written as 
$$
[w] = \bigcap \limits_{i=1}^k \{v \in U \mid \texttt{x\textsubscript{r\textsubscript{i}}} \in v\}.
$$
The set $U$ is a disjoint union of  a set $[w]$ and a set 
$$[w]^c := \bigcup \limits_{i=1}^k\{ v \in U \mid \texttt{x\textsubscript{r\textsubscript{i}}} \notin v\}.$$
Therefore  $P([w]) = P(U) - P([w]^c)$. 

Let  $p_i$ denote the sum of probabilities of parsing all the words from $U$, which contain at most $i$ symbols \texttt{+}.
Clearly, $\lim \limits_{i \to \infty}p_i = P(U)$ and $p_0= 1 - p$.  
We derive a recursive formula for $p_i$. Every tree, which derives a string from $U$ and contains between $1$ and $i+1$ symbols $+$, $i \geq 0$, has the rule $E \to E\texttt{ + c}V$ in the root node. Then, the left subtree of the root node starts with $E$ and derives a word from $U$ with at most $i$ occurrences of symbol \texttt{+}. The probability of such subtree is $p_i$. The right subtree that starts with $V$ can be rewritten to one of the symbols from  $\{\texttt{x\textsubscript{r\textsubscript{1}}}, \dots, \texttt{x\textsubscript{r\textsubscript{k}}}\}$. This happens with the probability $\sum \limits_{j=1}^k q_{r_j}$. Therefore,
$$
p_{i+1} = (1 - p) + pp_i \sum \limits_{j=1}^k q_{r_j},
$$
and, in the limit, $P(U) = (1 - p) + p P(U) \sum \limits_{j=1}^k q_{r_j}$, so
$$P(U) =\displaystyle \frac{1-p}{1 -p \sum \limits_{j=1}^k q_{r_j}}.$$
We get $P([w]^c)$ with inclusion-exclusion principle as  
$$
P(\bigcup \limits_{i=1}^k\{ v \in U \mid \texttt{x\textsubscript{r\textsubscript{i}}} \notin v\}) = 
\sum \limits_{\emptyset \neq I \subseteq \{1, \dotsc, k\}} (-1)^{|I| + 1} P(A_I),
$$
where $A_I = \bigcap \limits_{i \in I} \{v \in U \mid \texttt{x\textsubscript{r\textsubscript{i}}} \notin v \}$. The derivation of the formula for $P(A_I)$ is similar to the derivation of the formula for $P(U)$ described above. Let $p_i'$ be the probability of parsing a word from $[w]^c$, which contains at most $i$ symbols $+$. Then $P(A) = \lim \limits_{i\to \infty} p_i'$ and (as derived above)
$$
p_{i+1}' = (1-p) + pp_i'\sum \limits_{j \in \{1, \dotsc, k\} \setminus I}q_{r_j},
$$
from which  $P(A_I) = \displaystyle \frac{1-p}{1-p\sum \limits_{i \in \{1, \dotsc, k\} \setminus I}q_{r_i}}$ follows.
Therefore
\begin{equation}
    \label{eq:linearna}
    P([w]) = \sum \limits_{I \subseteq \{1, \dotsc, k\}} (-1)^{|I|} \displaystyle \frac{1-p}{1-p\sum \limits_{i \in \{1, \dotsc, k\} \setminus I}q_{r_i}}.
\end{equation}
For the calculation of the sum \eqref{eq:linearna}, we need 
$$
 \sum \limits_{I \subseteq \{1, \dotsc, k\}}  ( |\{1, \dotsc, k\} \setminus I| + 5) = \sum \limits_{i=0}^k  \binom{k}{i}(i+ 5) = 2^{k-1} (10 + k)
$$
elementary computing operations, where $k$ is the number of different variables  $x_i$ in the string $w$. This leads to an exponential time complexity with respect to $k$. For practical use of the formula, we need an efficient approximation of the exact formula.

\subsubsection{Approximation of the exact formula}

Let $w$ be as above a string that contains the symbols \texttt{x\textsubscript{r\textsubscript{1}}}, \dots \texttt{x\textsubscript{r\textsubscript{k}}}, and no other symbols \texttt{x\textsubscript{j}}.
Let $v \in [w]$ be a string with $i$ occurrences of a symbol \texttt{+} and let $l_j$ be the number of occurrences of a symbol \texttt{x\textsubscript{r\textsubscript{j}}} in the string $v$. Clearly, $l_1 + \dotsb + l_k = i$ and $l_j \geq 1$. For each such valid tuple $(l_1, \dotsc, l_k)$, there exist  
$\binom{i}{l_1,\, \dotsc,\, l_k}$ different strings in $[w]$, which fulfill the upper demands. Since the probability of parsing  one such word is equal to $(1-p) p^i q_{r_1}^{l_1} \dotsm q_{r_k}^{l_k}$, the probability of parsing  $[w]$ is equal to 
\begin{equation}
\label{eq:eq:linearna-vsota-neskoncna}
P([w])=
\sum \limits_{i= k}^\infty (1-p)p^i  \left ( \sum_{\substack{l_1 + \dotsb + l_k = i \\ l_j \geq 1}}\binom{i}{l_1, \dotsc, l_k} q_{r_1}^{l_1} \dotsm q_{r_k}^{l_k} \right )    .
\end{equation}
\begin{figure}[t]
\begin{subfigure}{0.49\textwidth}
    \centering
    \includegraphics[scale=0.25]{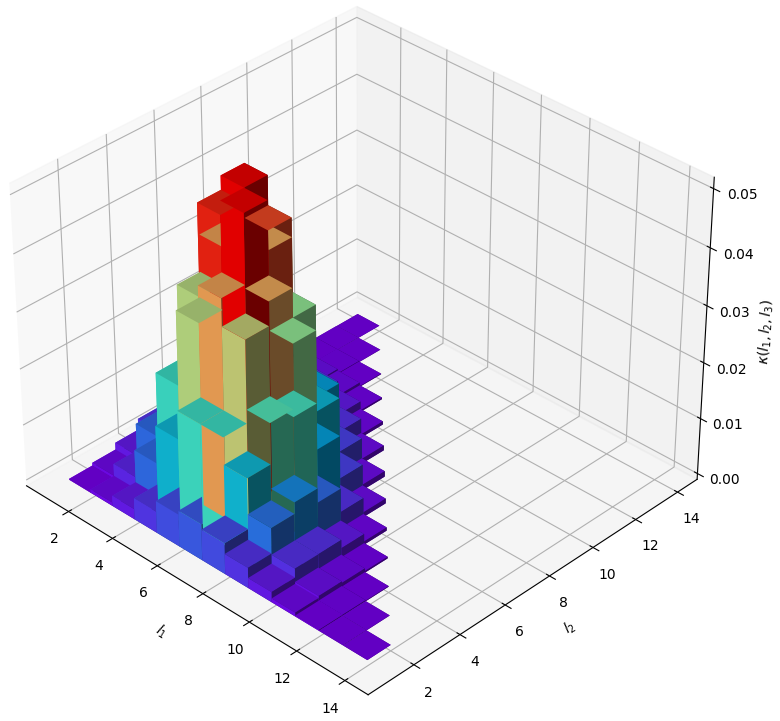}
\end{subfigure}
\begin{subfigure}{0.49\textwidth}
    \centering
    \includegraphics[scale=0.25]{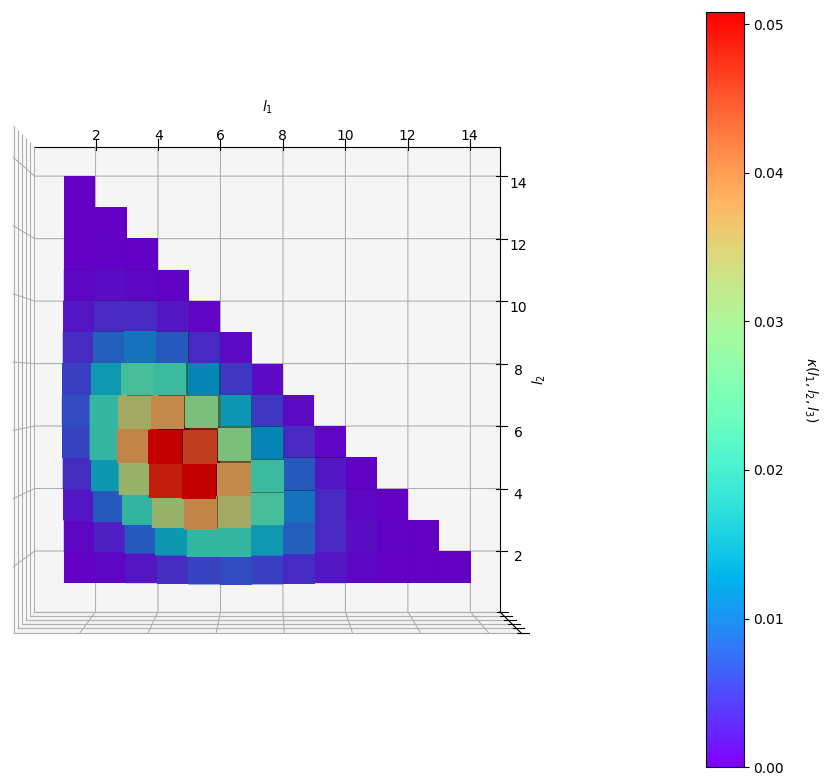}
\end{subfigure}    
\caption{3D-bar plot (in two perspectives) of the sizes of the terms $\Tilde{\kappa}$ in the inner sum of Eq.~\eqref{eq:double-sum}, for $i = 15$, $k =3$, and $q_1 = q_2 = q_3 = 0.3$.}
\label{fig:multinomial_distribution}
\end{figure}

A good approximation for $P([w])$ would be
\begin{equation}
    \label{eq:double-sum}
\sum \limits_{i=k}^M (1-p)p^i  \left ( \sum_{\substack{l_1 + \dotsb + l_k = i \\ l_j \geq 1}}\M \right )    
\end{equation}
where 
$$
\M = \binom{i}{l_1, \dotsc, l_k} q_{r_1}^{l_1} \dotsm q_{r_k}^{l_k}
$$
and  $M\in \N$, but the calculation using this formula is still slow, due to the high number of terms in the inner sum -- for a given $i$, their number is $\binom{i-1}{k-1}$. 
        Observe the  graphs  of $\Tilde{\kappa} \colon (l_1, \dotsc, l_{k-1}) \mapsto \kappa(l_1, \dotsc, l_{k-1}, i - \sum_{j = 1}^{k - 1} l_j)$ in Figure~\ref{fig:multinomial_distribution}. Clearly, $\kappa$ is a generalization of probability mass function of multinomial distribution, so there is (at least one) partition $p'= (l_1', \dotsc, l_{k-1}', l_k')$, $l_k' = i -\sum_{j = 1}^{k - 1} l_j'$, where $\Tilde{\kappa}$ reaches its maximum and the terms $\kappa(l_1, \dotsc, l_k)$ in the inner sum of Eq.~\eqref{eq:double-sum} decrease when the distance between $(l_1 , \dotsc, l_{k-1})$ and $(l_1' , \dotsc, l_{k-1}')$ increases. Therefore, the partitions that are far away from $p'$ can be ignored in the approximation.

\begin{figure}[t]
\begin{subfigure}{0.49\textwidth}
    \centering
    \includegraphics[width=\textwidth]{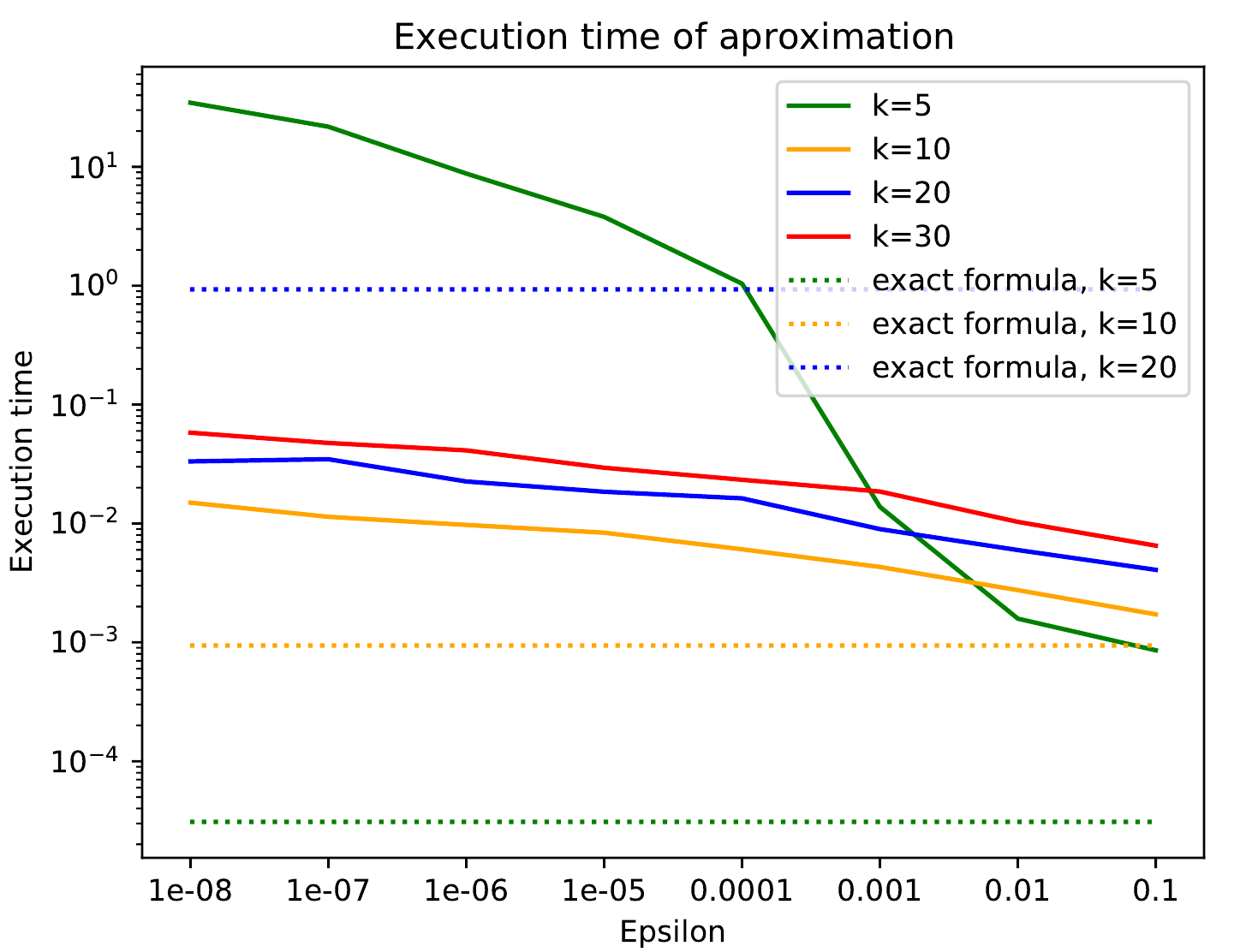}
   \caption{Execution time (in seconds), for $p=0.5$.}
\end{subfigure}
\begin{subfigure}{0.49\textwidth}
    \centering
    \includegraphics[width=0.96\textwidth]{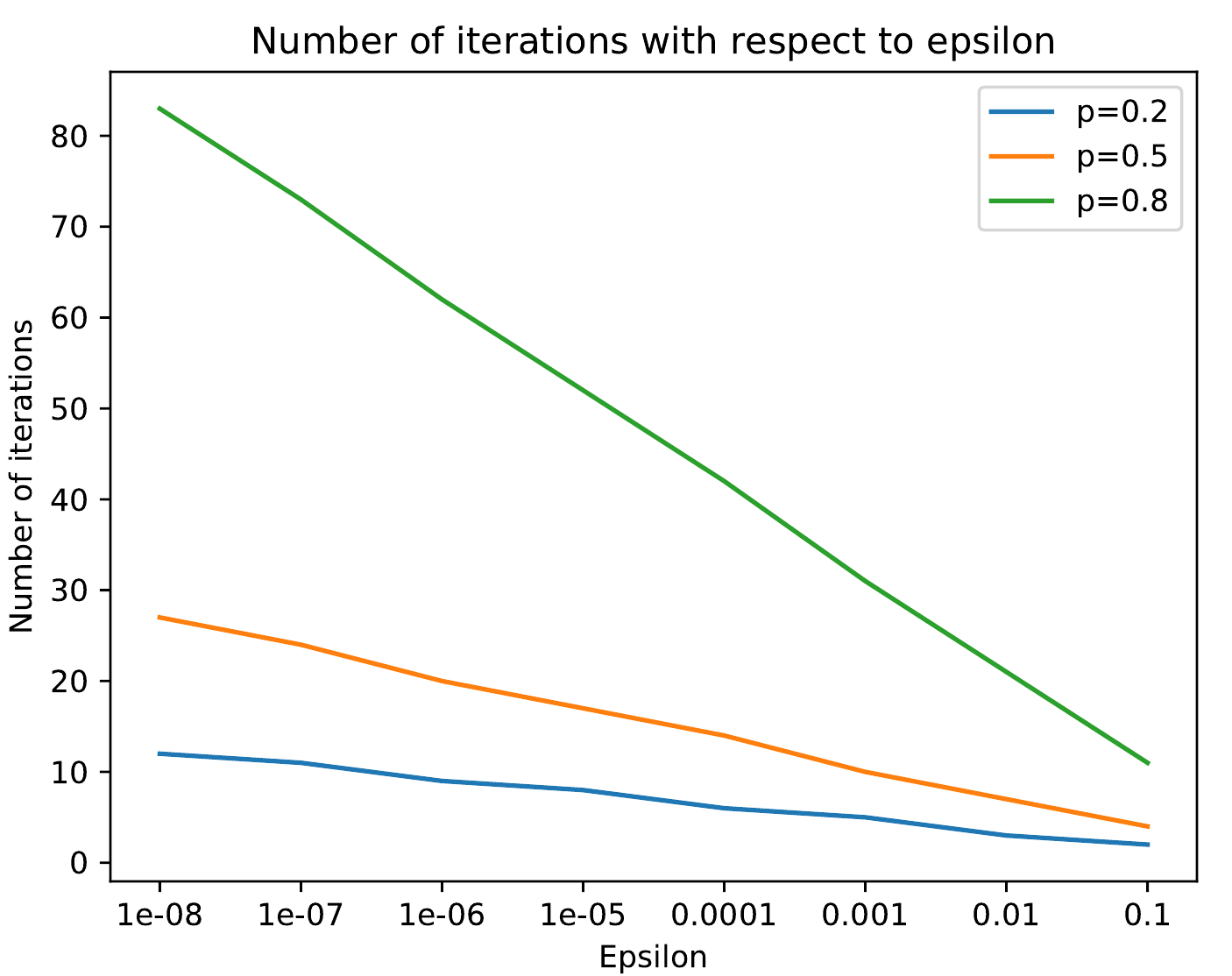}
    \caption{Number of computed terms, for $k = 10$.}
\end{subfigure}    
\caption{Computational complexity of the exact calculation and the approximation of the probability of the expression $c + c x_1 + \cdots + c x_{k}$ for the grammar with the rules
$E \to \opcija{E \, \texttt{+} \, \texttt{c}V}{p} \mid \opcija{\texttt{c}}{1 -p}$,
and
$V \to \opcija{\texttt{x\textsubscript{1}}}{q_1} \mid \dotsb \mid\opcija{\texttt{x\textsubscript{n}}}{q_k}$.
r different values of the desired accuracy $\epsilon$ and different numbers of variables $k$, where $q_1 = \dotsb = q_k = 1/k$. Graph (a) depicts the execution time (in seconds), and graph (b) depicts the number of iterations required to achieve the desired accuracy $\epsilon$.}
\label{graphs-time}
\end{figure}

We approximate each inner sum by starting with partitions that correspond to the (approximate) maximum of $\Tilde{\kappa}$ and moving to smaller elements using breadth-first search until enough elements are included to reach the desired precision. The number of iterations $M - k + 1$ and the number of ignored elements in each iteration can both be set to reach an approximation error smaller than $\epsilon$ for arbitrary given $\epsilon > 0$. \ref{apendix:izpeljava-aprox} provides implementation-level detailed description of the approximation algorithm. Let $\epsilon' = \epsilon / (2(M - k + 1))$ and $\overline{m}_k$ be the value of $\M$ in the mode at $i = k$. As derived in \ref{apendix:aprox-complexity}, time complexity of the algorithm is bounded by
$$
T(M) \leq \bigO(M^2)\left( (M - k + 1) \binom{M}{k-1} - \frac{\epsilon'(1 - p^{M - 1 - k})}{(1 - p)^2 p^{M - 2} \overline{m}_k} \right).
$$

Figure~\ref{graphs-time} shows the utility of the approximation by showing the empirical comparison of the complexity of the exact algorithm to the approximate one for different values of $k$ and varying desired precision $\epsilon$. The comparison shows that for values of $k > 20 $, the approximate algorithm is significantly faster than the exact one.

\subsection{Polynomial grammar}
\label{polynomial-grammar}
Linear grammar from Section~\ref{linear-grammar} can be adapted to derive polynomials with constants. Instead of rewriting $V$ to a symbol for single variable, we rewrite it to a sequence of symbols that corresponds to a monomial $x_1^{m_1}\dotsb x_n^{m_n}$. Hence, we get the following \emph{polynomial grammar}, which generates polynomials  $c + cx_1^{m_{1, 1}}\dotsm x_n^{m_{n,1}} + \dotsb + cx_1^{m_{1, k}}\dotsm x_n^{m_{n,k}}$:
$$
\begin{array}{rcl}
    E & \to & \opcija{E \, \texttt{+} \, \texttt{c}V}{p} \mid \opcija{\texttt{c}}{1 -p}\\
    V & \to & \opcija{VF}{q} \mid \opcija{F}{1-q} \\
    F & \to & \opcija{\texttt{x\textsubscript{1}}}{q_1} \mid \dotsb \mid \opcija{\texttt{x\textsubscript{n}}}{q_n}.
\end{array}
$$
Note that $x_1^2 x_2$ cannot appear as a string in the language of this grammar, but only as an expression that corresponds to the strings \texttt{x\textsubscript{1}x\textsubscript{1}x\textsubscript{2}}, \texttt{x\textsubscript{1}x\textsubscript{2}x\textsubscript{1}} and \texttt{x\textsubscript{2}x\textsubscript{1}x\textsubscript{1}}. However, for readability, we abbreviate these strings as \texttt{x\textsubscript{1}\textsuperscript{2} x\textsubscript{2}}, and do similarly for the others in these section.

We now show that the computation of probability 
$$
P([\texttt{c + c x\textsubscript{1}\textsuperscript{m\textsubscript{1,1}}} \cdots \texttt{x\textsubscript{n}\textsuperscript{m\textsubscript{n,1}}} 
+ \dotsb + 
\texttt{c x\textsubscript{1}\textsuperscript{m\textsubscript{1,k}}}\cdots \texttt{x\textsubscript{1}\textsuperscript{m\textsubscript{n,k}}}])
$$
for a given polynomial grammar can be translated to a problem of calculating the probability for a linear grammar. For start, note that every branch
$$
V \to VF \to VFF \to \dotsb \to VF\dotsm F \to FF\dotsb F \to 
\texttt{x\textsubscript{r\textsubscript{1}}} \dotsb \texttt{x\textsubscript{r\textsubscript{k}}}$$
in the parse tree can be replaced with a single production rule $V \stackrel{p}{\to} x_1^{m_1}\cdots x_n^{m_n}$, where $x_1^{m_1}\cdots x_n^{m_n}$ should be understood as a single symbol and $p$ is the probability of generating the expression $x_1^{m_1}\cdots x_n^{m_n}$ from $V$ with a polynomial grammar. The resulting grammar is linear, so the procedure from Section~\ref{linear-grammar} can be applied to this grammar as well.

\begin{note}
The obtained structure is formally not a grammar, since it includes an infinite number of production rules for $V$, and an infinite number of terminals. However, for the purpose of computing the probabilities of expressions, we can allow any countable set of production rules and generalize the previous definitions (and procedures).
\end{note}

\noindent For a full algorithmic solution, the probability of deriving the expression $x_1^{m_1}\dotsm x_n^{m_n}$ from  $V$ must be calculated.
Let $M = m_1 + \dots + m_n$. To derive the expression from $V$, the rule $V\to VF$ must be applied $(M - 1)$-times and the rule $V\to F$ once. Then, each of the rules $F\to \texttt{x\textsubscript{i}}$ must be applied $m_i$ times. Thus, the probability of deriving the expression from  $V$ equals
$$
p = \binom{M}{m_1, \dotsc , m_n}q^{M-1}(1-q)q_1^{m_1}\dotsb q_n^{m_n}.
$$

\subsection{Rational grammar}
\label{rational-grammar}

Using the polynomial grammar from the previous section, we now define a grammar for deriving \emph{rational functions}. We introduce a new start symbol $S$ that is rewritten to the quotient $(E)/(E)$. Then, each $E$ is rewritten to an arbitrary polynomial:
$$
\begin{array}{rcl}
    S & \to & \opcija{(E) \texttt{/} (E)}{1}\\
    E & \to & \opcija{E\texttt{ + c}V}{p} \mid \opcija{\texttt{c}}{1 -p}\\
    V & \to & \opcija{VF}{q} \mid \opcija{F}{1-q}\\
    F & \to & \opcija{\texttt{x\textsubscript{1}}}{q_1} \mid \dotsb \mid \opcija{\texttt{x\textsubscript{n}}}{q_n} 
\end{array}
$$
Strings $w$ derived by this grammar take the form of $u/v$ for any two strings $u$ and $v$ from the language of the polynomial grammar. To understand the structure of $[u/v]$, we use the following lemma.
\begin{lema}
\label{lema:racional-structure}
Let $u/v$ and $s/t$ be two strings from the language of the rational grammar, such that $u/v \sim s/t$. Than, $u \sim s$ and $v \sim t$.
\end{lema}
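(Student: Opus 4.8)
The plan is to unpack the definition of $\sim$ in terms of the function families $\Phi(\cdot)$, translate the equality $\Phi(u/v)=\Phi(s/t)$ into an equality of families of rational functions, and then use the uniqueness of numerator/denominator (up to the kind of equivalence captured by $\sim$) for rational expressions.

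First I would fix notation: write $u = u_1 \texttt{c} u_2 \texttt{c} \cdots \texttt{c} u_{a+1}$ and $v = v_1 \texttt{c} \cdots \texttt{c} v_{b+1}$ with $a$, resp. $b$, occurrences of \texttt{c}, so that a generic function in $\Phi(u/v)$ has the form
$$
(x_1,\dots,x_n) \mapsto \frac{\hat u(x_1,\dots,x_n; \bm c)}{\hat v(x_1,\dots,x_n; \bm d)},
$$
where $\hat u$ is the polynomial obtained from $u$ by substituting the constants $\bm c = (c_1,\dots,c_a) \in \F^a$ and $\hat v$ from $v$ with $\bm d \in \F^b$ (and similarly $s$, $t$). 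Since \texttt{c} appears in $u/v$ exactly $a+b$ times, $\Phi(u/v)$ is exactly the set of such quotients as $\bm c, \bm d$ range independently. The assumption $u/v \sim s/t$ says this set equals the analogous set for $s/t$.

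Next I would extract, from this equality of function \emph{families}, the separate equalities $u\sim s$ and $v\sim t$. The key observation is that $\Phi(u) = \{\hat u(\cdot;\bm c) \mid \bm c\}$ and $\Phi(v)$, $\Phi(s)$, $\Phi(t)$ are families of \emph{polynomial} functions, and a rational expression determines its reduced numerator and denominator up to a common scalar factor. Concretely: fix generic constants so that $\hat u/\hat v$ is a specific rational function $f$; since $f \in \Phi(s/t)$, there are constants making $\hat s/\hat t = f$ as functions on the maximal domain, hence $\hat u \hat t = \hat s \hat v$ as polynomial functions (clearing denominators is valid off a measure-zero set, and polynomial functions agreeing there agree everywhere over an infinite domain $D$). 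Running a degree/factorization argument — or, more carefully, using that the structural shape of strings in the polynomial grammar's language forces $\hat u, \hat v$ to range over \emph{all} coefficient choices with the fixed monomial supports dictated by $u$ and $v$ — one shows the monomial support of $u$ must coincide with that of $s$ and likewise $v$ with $t$; and since for polynomial-grammar strings the family $\Phi(\cdot)$ is determined precisely by the monomial support (all nonzero coefficients being free and independent), this yields $\Phi(u)=\Phi(s)$ and $\Phi(v)=\Phi(t)$, i.e. $u \sim s$ and $v\sim t$.

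\textbf{The main obstacle} is the clearing-of-denominators step and the need to control \emph{common factors}: a priori $u/v \sim s/t$ only guarantees $\hat u \hat t = \hat s \hat v$, which is consistent with, say, $s = u \cdot g$ and $t = v \cdot g$ for some common polynomial factor $g$ — but such an $s$ need not be a legal string of the polynomial grammar, and more importantly the \emph{family} $\Phi(s)$ would then not match $\Phi(u)$ unless $g$ is constant. So the real content is: for two polynomial-grammar strings, equality of the quotient families forces equality of the two coefficient-family ``shapes'' individually, which I would pin down via the observation that each $\Phi(\cdot)$ for a polynomial-grammar string is an affine family whose dimension (number of free constants) and monomial support are read off directly from the string, and these invariants are recoverable from the quotient family. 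I would carry this out by choosing particular specializations of the constants (e.g. setting the denominator constants to make $\hat v$ a monomial, or sending some to $0$) to isolate $\Phi(u)$ and $\Phi(v)$ from $\Phi(u/v)$, then matching against the analogous specializations of $\Phi(s/t)$.
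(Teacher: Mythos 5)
Your proposal is correct and follows essentially the same route as the paper's proof: reduce $u \sim s$ to equality of monomial supports, specialize the free constants so that the denominator becomes a (nonzero) constant, and use a degree comparison to rule out a nontrivial common polynomial factor between numerator and denominator. The paper carries out the degree step you sketch by first matching $\deg(u)$ with $\deg(s)$ via the top monomial of $u$, then specializing all constants of $u$ to $1$ to force the residual factor $t''$ to be constant, which is exactly the ``degree/factorization argument'' you outline.
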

The immediate consequence of the lemma is that $P([u/v]) = P([u])P([v])$. The probabilities $P([u])$ and $P([v])$ can be calculated using the algorithm from Section~\ref{polynomial-grammar}.

\begin{proof}[Proof of Lemma \ref{lema:racional-structure}.]
First, we will prove $u \sim s$. 
To do so, it is sufficient to prove that the same monomials occur in $u$ and $s$.

Let $N = x_1^{l_1}\dotsm  x_n^{l_n}$ be a monomial of the highest degree in $u$. We set the constant next to $N$ to $1$ while all the other free constants in $u$ are set to $0$.  In $v$ we set the free term (the constant next to the monomial $1$) to $1$, and all the other constants are set to $0$. This selection of constants in $u/v$ is equal to $N$ as a function, and since $u/v \sim s/t$, there exist a selection of constants $s'/t'$ for $s/t$ (where $s'$ is a polynomial obtained with a selection of constants in $s$, and $t'$ is a polynomial obtained with a selection of constants in $t$), such that $N$ is equal to $s' /t'$ (as a function).
Clearly, a degree  $\deg(s')$ must be equal to or larger than the degree $\deg (N)$. Since $N$ has the maximal degree in $u$, we have $\deg (s) \geq \deg(u)$. By a symmetric argument, we can prove that $\deg (u) \geq \deg(s)$, so the degrees of $u$ and $s$ are equal.

Let $u'' = N + \dotsb$ be a polynomial, which we get from a selection of constants in $u/v$, such that all the free constants in $u$ are set to $1$, the free term in $v$ is set to $1$, and all the other constants in $v$ are set to $0$. There exists selection of constants $s''$ in $s$ and $t''$ in $t$ that leads to $u'' = s'' / t''$, or 
$$
s'' = t'' u'' = t'' (N + \dotsb),
$$
where $t''$ is not the zero polynomial.
As proved above, $\deg(s'') = \deg(N)$. Since $\deg(s'') = \deg(t'') + \deg(N)$, we have $\deg(t'') = 0$, i.e., $t''$ is a constant polynomial.
Since $s'' = t'' u''$, $s$ must include all the monomials from $u$. After appplying a symmetric argument, we prove that $s$ and $u$ contain the same monomials, so $u \sim s$. Following the same reasoning, we can prove $v \sim t$.
\end{proof}

\subsection{A note on alternative grammars for linear expressions}
The linear grammar, presented in section \ref{linear-grammar}, is used in algorithms such as~\cite{brence2021}, due to easy interpretation of probabilities $P(V  \to \texttt{x\textsubscript{i}})$. But any other grammar, that generates a subset of linear expressions with constants, could be used as well. Here, we present an alternative with a simpler formula for probabilities.
$$
\begin{array}{lcl}
    S & \to & \opcija{V_1 \texttt{ + c}}{p_0}\;\mid\; \opcija{\texttt{c}}{1 -p_0}\\
    V_1 & \to & \opcija{V_2 \texttt{ + c x\textsubscript{1}}}{p_1} \;\mid\; \opcija{V_2}{q_1} \;\mid\; \opcija{\texttt{c x\textsubscript{1}}}{1-p_1 - q_1} \\
    \vdots \\
    V_{n-1} & \to & \opcija{V_n \texttt{ + c x\textsubscript{n-1}}}{p_{n-1}} \;\mid\; \opcija{V_n}{q_{n-1}} \;\mid\; \opcija{\texttt{c x\textsubscript{n-1}}}{1-p_{n-1} - q_{n-1}} \\
    V_n  &\to & \opcija{\texttt{c x\textsubscript{n}}}{1}
\end{array}
$$
The equivalence classes $[w]$ for words $w$ of this grammar all contain only a single word (unlike the equivalence classes of the grammar from Section \ref{linear-grammar}). Thus, $P(w) = P([w])$ for any word $w$ and the probability of parsing the expression $[w] = c +cx_{r_1} + \dotsb + cx_{r_k}$ is equal to 
$$
P([w]) =
\begin{cases}
    1 - p_0&;\; k = 0\\
    p_0\prod \limits_{i = 1}^{M }g(i)&;\; \text{otherwise}
\end{cases}\text{ ,\hphantom{ooooooiiooooo}}
$$
where  $M=\operatorname{max}(r_1, \dotsc, r_k)$ and
$$g(i)  = 
    \begin{cases}
        p_i &  i \in \{r_1, \dotsc, r_k\} \setminus\{M\}  \\
        1-p_i-q_i &  r_i = M \\
        q_i &\text{otherwise}
    \end{cases}.
$$

\section{Conclusion}

The article focuses on expression-generating probabilistic context-free grammars used for symbolic regression. We define expressions as equivalence classes of strings derived by grammar. We show that the problem of calculating the probability of deriving the strings in an equivalence class is undecidable in the general case of universal grammar for algebraic expressions. We present an algorithm for calculating the probability of a given expression generated with a given grammar for linear, polynomial, and rational expressions. Finally, we show that the exact probability can be efficiently approximated to a specified precision.

Two venues for further research emerge. First, a relevant open question for symbolic regression is what would be a most general restriction of the »Richardson« universal grammar presented in Section~4 that would allow for an algorithmic solution for calculating the probability of a given expression. For example, designing such an algorithm for a grammar that generates arbitrary algebraic expressions with four standard operators and generic constants would benefit symbolic regression. Second, the presented results on expression-generating expressions can be generalized to grammars generating groups, which might prove helpful in developing generative models and machine learning methods for algebraic structures.

\section*{Acknowledgements} 
The authors acknowledge the financial support of the Slovenian Research Agency via the research core funding No.~P2-0103 and No.~P1-0294 as well as project No.~N2-0128.
%
%

\appendix

\section{Removing linear cycles from probabilistic context-free grammars}
\label{app:removing-linear-cycles}
In this part, we assume that no null rules (rules of the form $A\to \epsilon$) are present in the grammar.
\begin{thm}
\label{thm:cycle-removal}
Let $G$ be a grammar and $ A_1 \xrightarrow{p_1} A_2 \xrightarrow{p_2} \dotsb A_{m-1} $ $ \xrightarrow{p_{m-1}}
        A_m \xrightarrow{p_m} A_1$ the longest linear cycle of pairwise distinct symbols $A_1$, $\dotsc$, $A_m$. Production rules of $G$ can be transformed so that a cycle $A_1 \to A_2 \to \dotsb \to A_m \to A_1$ is removed from the grammar, no new cycles of length at least $m$ emerge and the language $L(G)$ with its probability distribution over words stays unchanged. 
\end{thm}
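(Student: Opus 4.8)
The plan is to eliminate the linear cycle $A_1 \to A_2 \to \dots \to A_m \to A_1$ by collapsing it into the non-terminal $A_1$ and redirecting all rules accordingly, keeping careful track of probabilities so that the induced distribution over words is preserved. First I would reformulate the cycle's effect: starting from $A_1$, the grammar may loop around the cycle $j \geq 0$ full times (each full loop contributing probability $P := p_1 p_2 \cdots p_m$) before ``exiting'' at some $A_i$ via a non-cycle rule $A_i \to \alpha$ with $\alpha \neq A_{i+1}$ (indices mod $m$). Summing the geometric series, the total probability mass of reaching, from $A_1$, the string obtained by applying such an exit rule at $A_i$ is
$$
\frac{(p_1 \cdots p_{i-1})\,\bigl(P(A_i \to \alpha)\bigr)}{1 - P},
$$
where $p_1 \cdots p_{i-1}$ is the probability of traversing the arc from $A_1$ to $A_i$, and the factor $1/(1-P)$ accounts for the loops. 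The key observation is that $P < 1$: since the $A_j$ are pairwise distinct non-terminals and the grammar has no null rules, for each $A_j$ there must exist at least one rule $A_j \to \beta$ with $\beta \neq A_{j+1}$ (otherwise $A_j$ could never terminate a derivation), so $p_j < 1$ for at least one $j$, hence $P < 1$ and the series converges.

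Next I would define the transformed grammar $G'$. For each $j \in \{1,\dots,m\}$, remove the cycle rule $A_j \to A_{j+1}$. Then, for every non-cycle rule $A_i \to \alpha$ (with $\alpha \neq A_{i+1}$), add to $A_1$ a new rule $A_1 \to \alpha$ with probability $\dfrac{(p_1 \cdots p_{i-1})\,P(A_i \to \alpha)}{1-P}$; also reassign the remaining (non-cycle) rules of $A_1$ itself by dividing their probabilities by $1-P$. Every other non-terminal $A_i$ ($2 \le i \le m$) retains only its non-cycle rules, renormalized by dividing by $1 - p_i$ (the mass previously on $A_i \to A_{i+1}$). One must check these are genuine probability distributions: for $A_1$, the new probabilities sum to $\frac{1}{1-P}\sum_{i=1}^{m} (p_1\cdots p_{i-1}) \bigl(1 - [\text{cycle-mass at }A_i]\bigr)$, and a short computation (telescoping the partial products) shows this equals $1$; similarly for each $A_i$. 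Occurrences of $A_2, \dots, A_m$ on right-hand sides of other rules are left untouched, since those non-terminals still exist and still derive exactly their non-cyclic continuations.

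Then I would verify the two required invariants. \emph{(i) The distribution over $L(G)$ is unchanged:} group the parse trees of $G$ deriving a word $w$ by ``factoring out'' each maximal run around the cycle; the bijection-with-weights between these grouped trees and the parse trees of $G'$ follows from the geometric-sum identity above, applied at each node of $G$ where the cycle is entered. \emph{(ii) No new cycle of length $\geq m$ appears, and the cycle $A_1\to\dots\to A_m\to A_1$ is gone:} the only genuinely new rules are of the form $A_1 \to \alpha$ where $\alpha$ is a right-hand side already present in $G$; any linear cycle in $G'$ through such a rule would correspond to a linear cycle in $G$ in which the arc $A_i \to A_{i+1}$ has been ``short-circuited'' to $A_1$, hence to a linear cycle of $G$ of length no greater than the original (and a cycle avoiding all new rules is a cycle of $G$, which by maximality of $m$ has length $\le m$; one checks it cannot equal the removed one). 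The main obstacle I anticipate is bookkeeping subtlety in case (ii): a new rule $A_1 \to \alpha$ may reintroduce $A_1$ (if $\alpha$ contains $A_1$) or some $A_j$, and one must argue carefully, using the maximality of $m$ among cycles of pairwise distinct symbols, that no resulting linear cycle has length $\ge m$ — in particular handling the edge cases where $\alpha$ itself equals some $A_j$ with $j \ne i+1$, which would create a new linear cycle that must be shown to be shorter than $m$ or to have been present already.
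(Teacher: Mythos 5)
Your geometric-series collapse is a reasonable idea for the exit distribution from $A_1$, and that part of the construction (new rules $A_1 \to \alpha$ with probability $(p_1\cdots p_{i-1})P(A_i\to\alpha)/(1-P)$, summing to $1$ by telescoping) is correct. The genuine gap is in how you treat the intermediate non-terminals $A_2,\dots,A_m$: deleting the arc $A_i \to A_{i+1}$ and renormalizing the remaining rules of $A_i$ by $1/(1-p_i)$ replaces the word distribution derivable from $A_i$ by the distribution \emph{conditioned on not entering the cycle at $A_i$}, whereas in $G$ a derivation rooted at $A_i$ may travel around the cycle and exit at any $A_j$. Nothing prevents such an $A_i$ from occurring on the right-hand side of other rules (or from being the start symbol), so the overall distribution changes. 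Concretely, take $S \to A_1\,[1/2] \mid A_2\,[1/2]$, $A_1 \to A_2\,[1/2] \mid \texttt{a}\,[1/2]$, $A_2 \to A_1\,[1/2] \mid \texttt{b}\,[1/2]$. In $G$ one computes $P(\texttt{a})=P(\texttt{b})=1/2$, but your transformed grammar has $A_2 \to \texttt{b}\,[1]$ and gives $P(\texttt{a})=1/3$. The repair is to apply the same collapsed rule set symmetrically at every node of the cycle, i.e.\ give each $A_i$ the rules $A_i \to \alpha$ with probability $(p_i\cdots p_{j-1})P(A_j\to\alpha)/(1-P)$ over all exits $A_j \to \alpha$, indices taken cyclically. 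Two further loose ends: $P<1$ is not guaranteed, since left-totality only forces one rule per non-terminal, so every $A_j$ could carry only its cycle rule ($P=1$) — a degenerate case that must be set aside explicitly (such derivations never terminate) rather than argued away; and the ``no new cycle of length $\ge m$'' bookkeeping for the new arcs $A_1 \to B$ still needs the quantitative observation that an exit at $A_i$ with $i\ge 2$ short-circuits at least $i-1\ge 1$ arcs of a pre-existing linear cycle.

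For comparison, the paper's proof is deliberately local and sidesteps all of this: it removes only the single rule $A_m \to A_1$ and replaces it by $A_m \to \alpha_i$ with probability $p_m c_i + \tilde{p}_i$, which exactly simulates ``apply $A_m \to A_1$, then a rule of $A_1$''. This leaves the word distribution from \emph{every} non-terminal unchanged, shortens the cycle to $A_2 \to \dots \to A_m \to A_2$ of length $m-1$, and defers termination to the corollary, which iterates the step. Your global collapse would be correct (and would remove the cycle in one shot) once applied at every $A_i$, but as written it fails the theorem's requirement that the probability distribution over $L(G)$ stay unchanged.
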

%
%
%
\begin{proof}
Let us first prove this for $m=1$. Observe a cycle  $A \xrightarrow{p} A$. Let  $\mathcal{A} = \{ A \to \alpha_i\; [c_i] \mid 1 \leq i \leq k \}$ be the set of production rules for symbol $A$, excluding the rule $A \to A$. 
 If $\mathcal{A}$ is the empty set (i.e., $p = 1$)
we do not do anything (note that a sequence of rules containing such a rule never derives a string). Otherwise, we
remove the rule $A \to A$ from the grammar and redefine the probabilities of other rules as $P(A \to \alpha_i) = c_i / (1-p)$,
so that they sum up to $1$. Clearly, the language $L(G)$ stays the same and no new linear cycles are created.
       
Assume $m>1$. Let $\mathcal{A}_1 = \{ A_1 \to \alpha_i\; [c_i] \mid 1 \leq i \leq k_1 \}$ be the set of production rules for symbol $A_1$, and let $\mathcal{A}_m = \{ A_m \to \beta_j [q_j] \mid  1 \leq j \leq k_m\}$
 be the set of production rules for symbol $A_m$, excluding rules $A_m \to A_1$,
 and $A_m \to \alpha_i$, for all $1\leq i\leq k_1$ (if any of these rules exist). We also define
$\tilde{p}_i = P(A_m \to \alpha_i)$.

We remove the rule $A_m \to A_1$ from the grammar and add new production rules for the symbol $A_m$ (if such rules already exist, we redefine them with new probabilities):
$$
\begin{array}{rcl}
     A_m & \to & \opcija{\alpha_i}{p_mc_i  + \tilde{p}_i}, \quad 1\leq i\leq k_1.
\end{array}
$$
By doing so, we simulate applying rule $A_m \to A_1$, followed by one of the rules for $A_1$. Observe that the set $L(G)$ and probability distribution over $L(G)$ are invariant to the transformation and stay the same. 

At least the rule $A_m\to A_2$ is now present in $R$, so the relation $R$ is still left-total. Now, we check that the sum of probabilities of the production rules for $A_m$ equals $1$. Before the transformation, we had $p_m +
\sum \limits_{j=1}^{k_m} q_j + \sum \limits_{i=1}^{k_1}\tilde{p}_i = 1 =\sum \limits_{i=1}^{k_1} c_i$. The sum of probabilities of production rules for $A_m$ after the transformation is equal to 
\begin{align*}
 \sum \limits_{j=1}^{k_m} q_j  + 
 \sum \limits_{i=1}^{k_1}( p_m c_i + \tilde{p}_i)  &= 
 p_m(
 \sum \limits_{i=1}^{k_1} c_i) + \sum \limits_{j=1}^{k_m} q_j  + 
 \sum \limits_{i=1}^{k_1}\tilde{p}_i 
 \\
 &= p_m + (1-p_m) \\
 &= 1,
\end{align*}
so the new rules are well-defined. 

Transformation creates a cycle $A_2 \to \dotsb \to A_m \to A_2$, which is of length $m-1$. Other linear cycles can also emerge due to some of the rules $A_m \to \alpha_i$, $1\leq i \leq k_1$. 
In that case, $\alpha_i = B$ must be a non-terminal symbol, such that $B \to \dotsb \to A_m$ and the transformation creates a cycle $A_m \to B \to \dotsb \to A_m$. In that case, 
$A_m \to A_1 \to B \to \dotsb \to A_m$
is a linear cycle that was present also before the transformation. Thus, its length is at most $m$, so the length of $A_m \to A_1 \to B \to \dotsb \to A_m$ is at most $m - 1$.
\end{proof}
An immediate corollary follows:
\begin{cor}
All the linear cycles of a given PCFG (that can derive a string) can be algorithmically removed with a transformation, that preserves the language and the probability distribution over the language.
\end{cor}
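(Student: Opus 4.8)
The plan is to derive the corollary by repeatedly applying Theorem~\ref{thm:cycle-removal}, controlling the iteration with a termination measure. Since the appendix assumes there are no null rules, every linear cycle consists of unit rules $X \to Y$ with $X, Y \in N$ only, so the relevant structure is the finite directed graph on $N$ that has an edge $A \to B$ for each rule $A \to B \in R$, and linear cycles are exactly its directed cycles. I would first reduce to \emph{simple} linear cycles, i.e. cycles whose non-terminals $A_1, \dotsc, A_m$ are pairwise distinct: any linear cycle in which a non-terminal repeats contains a strictly shorter linear cycle, so, by induction on length, every linear cycle contains a simple one; hence $G$ has no linear cycles iff it has no simple linear cycles. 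There are finitely many simple linear cycles, each of length at most $|N|$. I would also isolate the one degenerate case to which Theorem~\ref{thm:cycle-removal} does not apply nontrivially: a self-loop $A \to A$ whose only production for $A$ is $A \to A$. Such an $A$ occurs in no parse tree and hence in no derivation of a terminal string, so, exactly as in the proof of Theorem~\ref{thm:cycle-removal}, it is simply left in place; this is the content of the parenthetical restriction in the corollary's statement.

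With these self-loops set aside, I would run the following iteration. For a grammar $G$, let $m^\ast(G)$ be the length of the longest simple linear cycle of $G$ that is not such a degenerate self-loop (taking $m^\ast(G) = 0$ when there is none), and let $c(G)$ be the number of simple linear cycles of $G$ of length $m^\ast(G)$. At each step the algorithm picks a longest such cycle and applies the transformation of Theorem~\ref{thm:cycle-removal} to it; by the theorem, the chosen cycle is removed, no cycle of length $\geq m^\ast(G)$ is created, and $L(G)$ together with its probability distribution over words is preserved. Therefore $m^\ast$ never increases, and as long as $m^\ast$ stays equal to a fixed value, $c$ strictly decreases at every step. Hence $(m^\ast(G), c(G))$ strictly decreases in the lexicographic order on $\N_0 \times \N_0$, which is a well-order, so the procedure halts after finitely many steps, at which point $m^\ast(G) = 0$ and $G$ has no linear cycles except possibly the degenerate self-loops excluded above. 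Composing the finitely many transformations performed along the way --- each of which preserves the language and its distribution and is itself effective, since every iteration only needs to inspect the finite sets $N$ and $R$ to locate a longest simple unit-rule cycle --- yields a single algorithmic transformation with the asserted properties.

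I expect the main obstacle to be the termination argument: making precise that ``the number of simple linear cycles of a given length'' is a well-defined finite quantity, and checking that the clause ``no new cycles of length at least $m$'' in Theorem~\ref{thm:cycle-removal} is precisely what forces the measure $(m^\ast, c)$ to decrease at each step. The reduction from arbitrary linear cycles to simple ones, the treatment of the degenerate self-loop, and the effectiveness of the construction should be routine by comparison.
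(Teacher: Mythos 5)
Your proposal is correct and follows essentially the same route as the paper: repeatedly apply Theorem~\ref{thm:cycle-removal} to a longest linear cycle, observe that the number of cycles of maximal length strictly decreases while no longer ones appear, and set aside the degenerate self-loops $A \to A$ that can never occur in a derivation of a terminal string. Your version merely makes explicit the reduction to simple cycles and the lexicographic termination measure $(m^\ast, c)$, which the paper leaves implicit.
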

\begin{proof}
    If we apply the transformation from Theorem \ref{thm:cycle-removal} to the longest cycle, the number of cycles with the same length strictly decreases. Thus, we can remove all the linear cycles in a finite number of steps.
    The only exception are non-terminals $A\in N$, such that the rule $A\to A$ is the only rule for $A$. However, using such a rule can never lead to deriving a string.
\end{proof}
\begin{thm}
    Let $G = (N, T, S, R)$ be a grammar without linear cycles and null rules $A \to \epsilon$ and let $w \in L(G)$. There are finitely many parse trees that parse $w$.
\end{thm}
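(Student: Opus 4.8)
The plan is to bound the number of nodes of \emph{any} parse tree that derives $w$ by a quantity depending only on $|w|$ and $|N|$, and then to conclude: there are only finitely many rooted ordered trees with at most a given number of nodes, and for each such shape only finitely many ways to label the nodes from the finite set $\nut$ so that every internal node together with its children spells out a rule of the finite set $R$; hence only finitely many parse trees of $w$.

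First I would collect the consequences of having no null rules: since the right-hand side of every rule is nonempty, every internal node of a parse tree has at least one child, every subtree derives a nonempty word, and the leaves of any parse tree of $w$ are exactly the $|w|$ occurrences of terminals in $w$ (so in particular $w \neq \varepsilon$). Then I would invoke the elementary fact that a finite rooted tree with $L$ leaves has at most $L-1$ internal nodes that have two or more children; applied here, any parse tree of $w$ has at most $|w|-1$ branching nodes. The remaining internal nodes lie on \emph{unary chains}, i.e. maximal paths of internal nodes each having exactly one child. Here the hypothesis that $G$ has no linear cycles enters: along such a chain each rule used is of the form $A \to B$ with $B$ a single non-terminal, so if a non-terminal were repeated the segment between the two occurrences would be a branch $A \to \dotsb \to A$ with empty left and right context, i.e. a linear cycle in the sense of Definition~\ref{def:cycle}, which is excluded; therefore the non-terminals along any unary chain are pairwise distinct and the chain has length at most $|N|$.

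Combining the two bounds, I would contract every unary chain to a single edge: the contracted tree consists of the $|w|$ leaves, the at most $|w|-1$ branching nodes, and the root, hence has at most $2|w|$ nodes and at most $2|w|-1$ edges; re-expanding each edge into a chain with at most $|N|+1$ edges shows that the original parse tree has at most $(2|w|-1)(|N|+1)+1$ nodes (any crude bound of the form $\bigO_G(|w|)$ is enough). The finiteness count above then finishes the proof.

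I expect the only delicate step to be the unary-chain argument: one must check that the intermediate nodes of a maximal unary chain are genuinely non-terminals (a terminal can occur only as a leaf, which ends the chain) and that a repetition among them is precisely a \emph{linear} cycle, so that the no-linear-cycle hypothesis is exactly what is needed; the leaf-count identity, the $L-1$ bound on branching nodes, the contraction estimate, and the concluding finiteness count are all routine.
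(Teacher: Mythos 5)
Your proof is correct and follows essentially the same route as the paper's (much terser) argument: bound the branching internal nodes by $|w|-1$ using the absence of null rules, bound the length of unary chains by $|N|$ using the absence of linear cycles, and conclude that parse trees of $w$ have boundedly many nodes and hence are finite in number. You merely spell out the details (that intermediate nodes of a unary chain are non-terminals, that a repetition is exactly a linear cycle, and the final labeled-tree count) that the paper leaves implicit.
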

\begin{proof}
Denote the length of $w$ with $l$. There can be at most $l-1$ rules rewriting a symbol into at least two symbols in the tree. Other rules can be of a form $A \to B$ for some $A, B \in N$. There can be at most $|N|$ consecutive rules $A_1 \to \dotsb \to A_n$ (otherwise we would get a linear cycle. Therefore there are only finitely many possibilities for parse trees.
\end{proof}

\section{Probability-related (counter)examples}
\label{sec:appendix:examples}

In this part, we give an example of a grammar whose generating process does not finish with a positive probability, and an example of grammar where the probability of $P(\epsilon)$ cannot be expressed with radicals.

\begin{ex}
\label{ex:sum-of-probabilities}
The probabilities of parsing trees will not always sum to $1$.
We follow an example from~\cite{chi} and consider a PCFG with start symbol $S$, terminal symbol \texttt{x} and two production rules $S \to SS\ [p] \mid \texttt{x}\ [1-p]$. We denote with $p_i$ the sum of probabilities of all parse trees with depth at most $i$. Since $S \to \texttt{x}$ is the only possible parse tree with depth $1$, we have $p_1 = 1-p$.

Observe that each tree of depth at most $i \geq 2$ starts with $S \to SS$ (otherwise it would be equal to $S \to \texttt{x}$, and have a depth of $1$). Since the depth of  such a tree is at most $i$, the two subtrees, growing from the bottom of $S \to SS$, can stretch at most $i-1$ in depth. One such subtree occurs with a probability of $p_{i-1}$. From that, we derive a recursive formula 
$$
p_{i+1} = pp_i^2+1-p.
$$
Probabilities of all parse trees of the grammar $G$ sum up to $p_{\infty} := \lim_{i \to \infty} p_i$. From the recursive formula, we get  $p_{\infty} = pp_{\infty}^2 + 1-p$, from which we can derive $p_{\infty} = \min(1, \frac{1}{p} - 1)$. If  $p >  \frac{1}{2}$, the probability of parsing any parse tree is strictly smaller than $1$.
\end{ex}

\begin{thm}
\label{thm:p-epsilon-polynomial-zeros}
Computation of $P(\epsilon)$ for an arbitrary grammar is at least as hard as finding the roots of an arbitrary polynomial over $ \R_{\geq0}$. 
\end{thm}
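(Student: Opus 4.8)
The plan is to exhibit a (Turing) reduction from the problem of finding the nonnegative real roots of an arbitrary polynomial to the problem of computing $P(\epsilon)$, i.e. to show how an algorithm for $P(\epsilon)$ could be turned into an algorithm for locating those roots. The starting point is the remark of Example~\ref{ex:sum-of-probabilities}: for a grammar with a single non-terminal $S$ and null rule $S\to\epsilon$, the value $z=P(\epsilon)$ is the \emph{least} nonnegative solution of the fixed-point equation $z=\sum_k c_k z^k$, where $c_k$ is the total probability of the rules rewriting $S$ into a string of $k$ copies of $S$ (the quadratic case $p_\infty=\min(1,1/p-1)$ there being exactly this, with the smaller root selected). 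More generally, a grammar with several non-terminals and null rules realises a monotone system of polynomial equations with nonnegative coefficients whose least nonnegative solution is the vector of $P(\epsilon)$-values of its non-terminals; it is this flexibility that we will use to plant a prescribed root into $P(\epsilon)$.

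First I would preprocess the input polynomial $f$ (with rational coefficients, say). Dividing out a factor $x^m$ accounts for a possible root at $0$ and lets us assume $f(0)\neq 0$; substituting $x\mapsto\lambda x$ for a rational Cauchy-type bound $\lambda$ on the moduli of the roots of $f$, and multiplying by a nonzero rational, we may further assume that every nonnegative root we care about lies in $[0,1)$. Writing $f=f^{+}-f^{-}$, where $f^{+}$ collects the monomials of $f$ with positive coefficients and $f^{-}$ the negations of the monomials with negative coefficients, both $f^{+}$ and $f^{-}$ have nonnegative coefficients, and on $[0,\infty)$ we have $f(x)=0$ iff $f^{+}(x)=f^{-}(x)$.

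Next I would build a grammar $G_f$ from these data. A distinguished non-terminal $S$ is to satisfy "$P(\epsilon)$ for $S$ equals the required root"; auxiliary non-terminals together with rules of the form $A\to S S\cdots S$ realise each monomial $x^{i}$ of $f^{+}$ or $f^{-}$ as the $\epsilon$-probability $P_S(\epsilon)^{i}$ of a suitable non-terminal, a "dead" rule $A\to\texttt{a}$ into a fresh terminal $\texttt{a}$ (which contributes $0$ to any $\epsilon$-probability) absorbs the slack so that each non-terminal's rule probabilities are nonnegative and sum to $1$, and the coefficients of $f$ are normalised by $\sum_i|a_i|$ so that they may serve as rule probabilities. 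The equation for $S$ in the induced system is then arranged to be equivalent on $[0,1)$ to $f^{+}(z)=f^{-}(z)$, and engineered so that its \emph{least} nonnegative solution equals the smallest nonnegative root of $f$ in $[0,1)$ when one exists, and equals $1$ (a recognisable sentinel) otherwise. A single call of the hypothetical $P(\epsilon)$-algorithm on $G_f$ then yields that root (after undoing the rescaling); applying the procedure to $f$ deflated by the found root, and to the $x^m$ factor, recovers all nonnegative roots. Since $f\mapsto G_f$ is computable by elementary arithmetic on the coefficients, this is the desired reduction.

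The step I expect to be the main obstacle is the construction of $G_f$ together with the verification that its \emph{least} nonnegative fixed point — which is what $P(\epsilon)$ computes, rather than an arbitrary solution of the system — is exactly the root sought, while $G_f$ stays an honest probabilistic grammar (rule probabilities in $[0,1]$, summing to $1$ per non-terminal). This forces a careful choice of the auxiliary productions and of the normalisation constant, and a monotonicity/convexity argument to pin the least fixed point to the target; the degenerate cases (no nonnegative root, a root at $0$ or at $1$, repeated roots, linear cycles $S\to S$ that may appear and can be removed via Theorem~\ref{thm:cycle-removal}, and the zero polynomial) must also be dispatched. The remaining ingredients — the rescaling, the $f^{+}/f^{-}$ split, deflation, and the fact that $P(\epsilon)$ solves the grammar's fixed-point equation — are routine.
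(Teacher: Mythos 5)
Your proposal aims at a strictly stronger statement than the paper proves, and its central step is missing. The paper's own argument goes in the opposite, much more modest direction: it fixes an arbitrary degree $n$ and arbitrary probabilities $p_0,\dotsc,p_n$ with $\sum_i p_i\le 1$, builds the single-non-terminal grammar $A\to\epsilon\,[p_0]\mid A\,[p_1]\mid AA\,[p_2]\mid\dotsb\mid A^{n}\,[p_n]\mid\texttt{x}\,[1-\sum_i p_i]$, and observes that $P(\epsilon)$ satisfies $P(\epsilon)=\sum_{i=0}^{n}p_iP(\epsilon)^i$, hence is a real root of $r(t)=-t+\sum_{i=0}^{n}p_it^i$. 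The ``arbitrary polynomial over $\R_{\ge0}$'' in the statement is to be read as an arbitrary member of this family of fixed-point polynomials with nonnegative coefficients, not as an arbitrary real polynomial; the paper never claims, and never needs, a reduction from general polynomial root-finding.

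The step you yourself flag as ``the main obstacle'' --- constructing $G_f$ so that its least nonnegative fixed point is a prescribed root of a genuinely arbitrary $f$ --- is not merely unfinished, it runs into a structural obstruction. The $\epsilon$-probability equations of a PCFG always have the form $z=Q(z)$ (or a monotone system thereof) where $Q$ has \emph{nonnegative} coefficients; in the one-variable case $Q(z)-z$ is convex on $[0,\infty)$ and therefore has at most two nonnegative zeros, so a polynomial with three or more positive roots cannot be planted as such an equation. Your $f^{+}/f^{-}$ split does not escape this: to arrange $Q(z)-z\propto f^{+}(z)-f^{-}(z)$ you would need the monomials of $-f^{-}$ to enter $Q$ with negative coefficients, i.e.\ negative rule probabilities, and the auxiliary non-terminals $A\to S\dotsm S$ only contribute powers $z_S^{i}$, which upon substitution keep every coefficient of $S$'s effective equation nonnegative. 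Whether a multi-variable monotone system can encode the least positive root of an arbitrary polynomial is a genuinely harder question that your sketch does not answer, so the reduction as proposed does not go through; the preprocessing, sentinel value, and deflation steps all rest on this unproved core. If you instead prove what the paper actually asserts, the one-line grammar above suffices and none of that machinery is needed.
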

\begin{proof}
    Consider a grammar with start symbol $A$, terminal symbol \texttt{x}, and rules 
    $$
    \begin{array}{rcl}
         A &\to& \opcija{\epsilon}{p_0} \\
         A &\to& \opcija{A}{p_1} \\
         A &\to& \opcija{AA}{p_2} \\
         \vdots \\ 
         A &\to& \opcija{\underbrace{A\dotsb A}_{n}}{p_n} \\
         A &\to& \opcija{\texttt{x}}{1-p_0 - \dotsb - p_n}
    \end{array}
    $$
    We want to compute $P(\epsilon)$. The first rule in any parse tree $\tau$ for which $\operatorname{str}(\tau) = \varepsilon$, must be one of the first $n + 1$ rules above. Let say we applied the $i$-th rule, $0\leq i\leq n$, which rewrites a single $A$ to the string that contains $i$ copies of $A$. If the final string is $\epsilon$, each of the copies should be rewritten to $\epsilon$ (which happens with the probability $P(\epsilon)^i$). Therefore,
    $$
    P(\epsilon) = \sum \limits_{i=0}^n p_i P(\epsilon)^i.
    $$
    $P(\epsilon)$ is one of the (real) roots of the polynomial $r(t) = -t + \sum \limits_{i=0}^n p_i t^i $. Since the probabilities $p_i$ and the degree of the polinomial are arbitrary, the exact computation of $P(\epsilon)$ is not possible.
    \end{proof}

\section{Approximation}
\label{apendix:izpeljava-aprox}
Recall that we want to approximate the formula
\begin{equation}
    \label{eqn:linar-repeated}
    P([w]) = \sum \limits_{i=k}^\infty (1-p)p^i  \left ( \sum \limits_{(l_1, \dotsc, l_k) \in \operatorname{Par}(i)}\M \right ) ,
\end{equation}
where $\operatorname{Par}(i) = \{(l_1, \dotsc, l_k) \mid l_1 + \dotsc + l_k = i \land l_j \in \N\}$ is the set of integer partitions of $i$. We do this in two steps. First, we skip the tail of the outer sum and compute only the terms for $i\in \{k, k + 1, \dots, M\}$, for some chosen $M$. This results in $\operatorname{Error}_M$. Second, in the inner sum we take a similar approach and compute the terms $\M$ only for partitions in $S(i) \subset \operatorname{Par}(i)$, where $S(i)$ will be defined later. This results $\operatorname{Error}_S(M)$. Thus, our approximation is
\begin{equation}
    \label{eq:linearna-priblizek}
\widehat{P(M, S)}=
\sum \limits_{i=k}^M (1-p)p^i  \left ( \sum \limits_{(l_1, \dotsc, l_k) \in S(i)}\M \right )    .
\end{equation}
We show that we can control both errors, so that for any $\epsilon> 0$, we can choose $M$ and sets $S(i)$, such that $\operatorname{Error}_M\leq \epsilon / 2$ and $\operatorname{Error}_S(M)\leq \epsilon / 2$. In this case, the total error of the approximation will be at most $\epsilon$.

\subsection{Number of iterations $M-k$}
Let $Q = q_1 + \dotsb + q_k$. If we compute only the first $M-k$ elements of the sum \eqref{eqn:linar-repeated}, we upper-bound the error
$\operatorname{Error_M} = P([w]) - \widehat{P(M, \operatorname{Par})}$ as follows:
\begin{align*}
\operatorname{Error_M} &= \sum \limits_{i=M+1}^{\infty} (1-p)p^i \sum 
\limits_{(l_1, \dotsc, l_k) \in \operatorname{Par}(i)} \kappa(l_1, \dotsc, l_k) \\
&\leq \sum \limits_{i=M+1}^{\infty} (1-p)p^i \sum 
\limits_{l_1+ \dotsb+ l_k = i} \kappa(l_1, \dotsc, l_k) \\
&\stackrel{(*)}{\leq} \sum \limits_{i=M+1}^{\infty} (1-p)(pQ)^i \\
&= (1-p) \frac{(pQ)^{M+1}}{1-pQ} = \operatorname{Error}_M^{'}.
\end{align*}
On the step $(*)$, we used multinomial theorem for $(q_1 + \cdots + q_k)^i$.
For any  $\epsilon>0$, we can find $M\in\mathbb{N}$, such that $\operatorname{Error}_M^{'}\leq \epsilon / 2$. An appropriate value of $M$ would be
$$
  M = \left\lfloor\frac{\log \left( \displaystyle \frac{\epsilon}{2}\cdot \displaystyle\frac{1-pQ}{1-p}\right)}{\log(pQ)}\right\rfloor
$$

\subsection{Construction of $S(i)$}
Choose $\epsilon > 0$ and let us assume we already selected the number of iterations $M$, such that $\operatorname{Error}_M\leq \epsilon / 2$.
Let 
$$E(i) = \sum \limits_{ (l_1, \dotsc, l_k) \in \operatorname{Par}(i) \setminus S(i)} \M$$
be the error, made by skipping some terms of the inner sum in \eqref{eqn:linar-repeated} for some $i$. The total error caused by this is
$\operatorname{Error_S}(M) = \sum \limits_{i=k}^M (1-p) p^i E(i)
$. We will construct $S(i)$, such that the error of each term $(1-p)p^i E(i)$ will not be greater than $\epsilon' = \epsilon / (2(M - k +1))$ and therefore $\operatorname{Error_S}(M)  \leq \epsilon /2 $.
We propose $S(i)$ to be the set of the partitions
that yield the $\lceil (1-\gamma _i) \binom{i-1}{k-1} \rceil$ highest values of $\M$. 
Since $|\operatorname{Par}(i) \setminus S(i)| / |\operatorname{Par}(i)|\leq \gamma_i$, the following  estimates can be made:
\begin{align*}
E(i) &\leq \gamma_i\binom{i-1}{k-1}\min \limits_{(l_1, \dotsc, l_k) \in S(i)} \M \\
     &\leq \gamma_i\binom{i-1}{k-1}\max \limits_{(l_1, \dotsc, l_k) \in S(k)} \M.
\end{align*}
The last step follows from the fact that the maximal probability $\M = \binom{i}{l_1, \dotsc, l_k} q_{r_1}^{l_1} \dotsm q_{r_k}^{l_k}$ at $i = k$
is greater that any other probability at $i > k$.
Let us denote $\overline{m}_k = \max\limits_{(l_1, \dotsc, l_k) \in S(k)} \M$. Then, we should have
$$
\gamma_i \leq \frac{\epsilon'}{(1 - p)p^i \overline{m}_k {i - 1 \choose k - 1}},
$$
so the values $\gamma_i$ can be iteratively computed as
$$
\gamma_i' = \begin{cases}
    \hphantom{o}\frac{\epsilon'}{(1 - p) p^k \overline{m}_k} &; i = k\\
    \gamma_{i - 1}' \cdot \frac{i - k + 1}{i p} &; i > k
\end{cases}
$$
and $\gamma_i = \min\{1, \gamma_i'\}$, since the equation above might lead to $\gamma_i' > 1$ and it is important to keep (the upper bound for) $\gamma_i$ as big as possible.

\subsection{Modes of multinomial distribution}
Finding the mode of the multinomial distribution, i.e., the point at which the function $(l_1, \dotsb, l_k) \mapsto \M$ achieves its maximum is beyond trivial~\cite{mode-gall,mode-white}. However, we do not need the exact mode, but only its approximation. Our solution follows~\cite{mode-finucan} and computes the approximation of the inner sum by starting in the point $\lfloor\frac{i}{q_1 + \dotsb + q_k}(q_1, \dotsc, q_k)\rfloor$ where $\lfloor \cdot \rfloor$ is the (component-wise) floor function. Then, we use breadth-first search to find all the points $(l_1, \dots, l_k)$ for which the value of $\M$ is large enough (as explained in the main text, the value decreases when moving away from the mode(s)), until enough elements were calculated for the desired precision of the approximation.

\subsection{Time complexity}
\label{apendix:aprox-complexity}
We will first derive  the time complexity of computing multinomial coefficients. Multinomial coefficients are symmetric in their coefficients. Therefore it is sufficient to calculate $\binom{l_1 + \dotsb + l_k}{l_1, \dotsc, l_k}$ for a partition $(l_1, \dotsc, l_k)$, such that 
$l_1 \geq l_2 \geq \dotsb \geq l_k$.
We will compute such coefficients recursively by using the known formula
$\binom{l_1 + \dotsb + l_k}{l_1, \dotsc, l_k} = \binom{l_1 + \dotsb + l_k}{l_k}\binom{l_1 + \dotsb + l_{k-1}}{l_1, \dotsc, l_{k-1}}$. To do so efficiently, we need to compute all the coefficients
$\binom{l_1 + \dotsb + l_q}{l_1, \dots, l_q}$ for each $q \in \{1, \dots, k-1\}$. On each iteration, we need $\mathcal{O}(l_q)$ steps to compute the binomial coefficient $\binom{l_1 + \dotsb + l_q}{l_q}$ and one additional multiplication to obtain the coefficient $\binom{l_1 + \dotsb + l_q}{l_1, \dots , l_q}$.

To compute all the multinomial coefficients $\binom{l_1 + \dotsb + l_k}{l_1, \dotsc, l_k}$ that are needed in the first $i - k + 1$ iterations of the approximation, we need to compute the coefficients corresponding to ordered partitions $l_1 + \dotsb + l_q = j$, for all $j \in \{1, \dots, i\}$ and for all $q \in \{1, \dots, k\}$. The number of steps needed to compute multinomial coefficients is thus
$$
\bigO\left(
\sum \limits_{j=1}^i \sum \limits_{q=1}^k \sum \limits_{\substack{l_1 + \dotsb + l_q = j \\ l_1 \geq \dotsb \geq l_q}} l_q
\right).
$$
After computing multinomial coefficients, 
we need additional
$$
T(M) = 
\sum \limits_{i=k}^M \sum \limits _{(l_1, \dotsc, l_k) \in S(i)} \left ( \bigO(\log_2 l_1 + \dotsb + \log_2 l_k) + \bigO(k) \right ),
$$
steps to compute the approximation,
since we need $\bigO(\log_2l_j)$ for computing $q_j^{l_j}$ (using fast exponentiation\footnote{For example, to quickly compute $a^{22}$, one first computes the binary representation of $22 = 10110_{(2)}$ and the values $a^2$, $a^4$, $a^8$, and $a^{16}$ (the next term is obtained by squaring the previous one). Then, $a^{22}$ is computed as $a^{22} = a^{16}\cdot a^{4} \cdot a^{2}$.}) and $\bigO(k)$ for the product $\binom{i}{l_1, \dotsc, l_k} q_1 ^{l_1} \dotsm q_k^{l_k}$.
Recall the inequalities $l_j \leq i-k+1$,
and $\gamma_i \leq \epsilon' / ((1 - p)p^i \overline{m}_k {i - 1 \choose k - 1})$, where $\epsilon' = \epsilon / (2(M - k + 1))$. Then,
\begin{align*}
T(M) &\leq \sum \limits_{i=k} ^M k
|S(i)| \left(\bigO(\log_2(i-k+1)) + \bigO(k)\right)\\
&\leq \sum \limits_{i=k}^M \left(1-\frac{\epsilon'}{(1 - p)p^i \overline{m}_k {i - 1 \choose k - 1}}\right) k
\binom{i-1}{k-1}  \left(\bigO(\log_2(i-k+1)) + \bigO(k)\right)   \\
&\leq \left(\bigO(M\log_2 M) + \bigO(M^2)\right) \sum \limits_{i=k}^M \left(1-\frac{\epsilon'}{(1 - p)p^i \overline{m}_k {i - 1 \choose k - 1}}\right) k
\binom{i-1}{k-1} \\
&=  \bigO(M^2)\left(
\sum \limits_{i=k} ^M k
\binom{i-1}{k-1} - \frac{\epsilon'}{(1 - p) \overline{m}_k}\sum_{k = i}^M \frac{1}{p^i}
\right)  \\
&\leq \bigO(M^2)\left( (M - k + 1) \binom{M}{k-1} - \frac{\epsilon'(1 - p^{M - 1 - k})}{(1 - p)^2 p^{M - 2} \overline{m}_k} \right).
\end{align*}

%
%

 \bibliographystyle{elsarticle-num-names} 
 \bibliography{main-cite}






\end{document}